\newtheorem{theorem}{Theorem}[section]
\newtheorem{lemma}[theorem]{Lemma}
\newtheorem{proposition}[theorem]{Proposition}
\theoremstyle{definition}
\newtheorem{definition}[theorem]{Definition}
\newtheorem{claim}{Claim}
\theoremstyle{remark}
\newtheorem*{remark}{Remark}
\newcommand{\remove}[1]{}
\newcommand{\pa}[2]{\mathbf{P}_{#1}^{(#2)}}
\newcommand{\cy}[2]{\mathbf{C}_{#1}^{(#2)}}
\newcommand{\hp}[2]{\mathbf{H}_{#1}^{(#2)}}
\newcommand{\hc}[2]{\mathbf{M}_{#1}^{(#2)}}
\newcommand{\G}{\mathbf{G}}
\newcommand{\eg}{\textit{e.g.}}
\newcommand{\ie}{\textit{i.e.}}
\begin{document}

\title{Generalized Fibonacci and Lucas cubes arising from powers of paths and cycles}


\author[di]{P.~Codara\corref{cor1}}
\ead{codara@di.unimi.it}
\author[di]{O.M.~D'Antona}
\ead{dantona@di.unimi.it}
\cortext[cor1]{Corresponding author}
\address[di]{Dipartimento di Informatica,
via Comelico 39, I-20135 Milano, Italy}


\begin{keyword}
independent set\sep path\sep cycle\sep power of graph\sep Fibonacci cube
\sep Lucas cube \sep Fibonacci number\sep Lucas number\sep \MSC[2010] 11B39, 05C38
\end{keyword}


\begin{abstract}
The paper deals with some generalizations of Fibonacci and Lucas sequences,
arising from powers of paths and cycles, respectively.

In the first part of the work we provide a formula for the number of edges of
the Hasse diagram of the independent sets of
the $h$\textsuperscript{th} power of a path ordered by
inclusion. For $h=1$ such a diagram is called a Fibonacci
cube, and for $h>1$ we obtain a generalization of the Fibonacci cube.
Consequently, we derive a generalized notion of Fibonacci sequence,
called $h$-Fibonacci sequence.
Then, we show that the number of edges of a generalized Fibonacci cube
is obtained by convolution of an $h$-Fibonacci sequence with itself.

In the second part we consider the case of cycles.
We evaluate the number of edges of the Hasse diagram of the independent
sets of the $h$\textsuperscript{th} power of a cycle ordered by
inclusion. For $h=1$ such a diagram is called Lucas
cube, and for $h>1$ we obtain a generalization of the Lucas cube.
We derive then a generalized version of the Lucas sequence,
called $h$-Lucas sequence. Finally, we show that the number of edges
of a generalized Lucas cube is obtained by an appropriate convolution
of an $h$-Fibonacci sequence with an $h$-Lucas sequence.
\end{abstract}

\maketitle

\section{Introduction}
For a graph $\G$ we denote by $V(\G)$ the set of its vertices, and by $E(\G)$
the set of its edges.
\begin{definition}\label{def:h-path}\label{def:h-cycle}
For $n, h\geq0$,
\begin{itemize}
\item[(i)] the \emph{$h$-power of a path}, denoted by $\pa{n}{h}$, is a graph
with $n$ vertices $v_{1}$, $v_{2}$, $\dots$, $v_{n}$ such that, for $1\leq i,j\leq n$, $i\neq j$,
$(v_i,v_j)\in E(\pa{n}{h})$ if and only if $|j-i|\leq h$;
\item[(ii)] the \emph{$h$-power of a cycle}, denoted by $\cy{n}{h}$, is a graph
with $n$ vertices $v_{1}$, $v_{2}$, $\dots$, $v_{n}$ such that, for $1\leq i,j \leq n$, $i\neq j$,
$(v_i,v_j)\in E(\cy{n}{h})$ if and only if $|j-i|\leq h$ or $|j-i|\geq n-h$.
\end{itemize}
\end{definition}
Thus, for instance, $\pa{n}{0}$ and $\cy{n}{0}$ are the graphs made of $n$ isolated nodes,
$\pa{n}{1}$ is the path with $n$ vertices, and  $\cy{n}{1}$ is the cycle with $n$ vertices.
Figure  \ref{fig:PQ_1-6_2} shows
some powers of paths and cycles.

\begin{figure}[h]
 \centering
 \subfigure[The graphs $\pa{1}{2}, \dots, \pa{5}{2}$]
   {\label{fig:P_1-6_2}\includegraphics[scale=0.7]{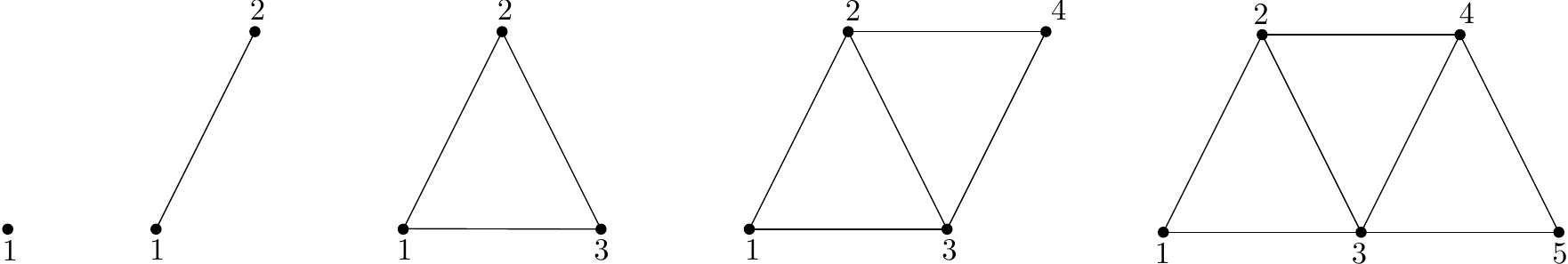}}\\
 \subfigure[The graphs $\cy{1}{2}, \dots, \cy{5}{2}$]
   {\label{fig:Q_1-6_2}\includegraphics[scale=0.75]{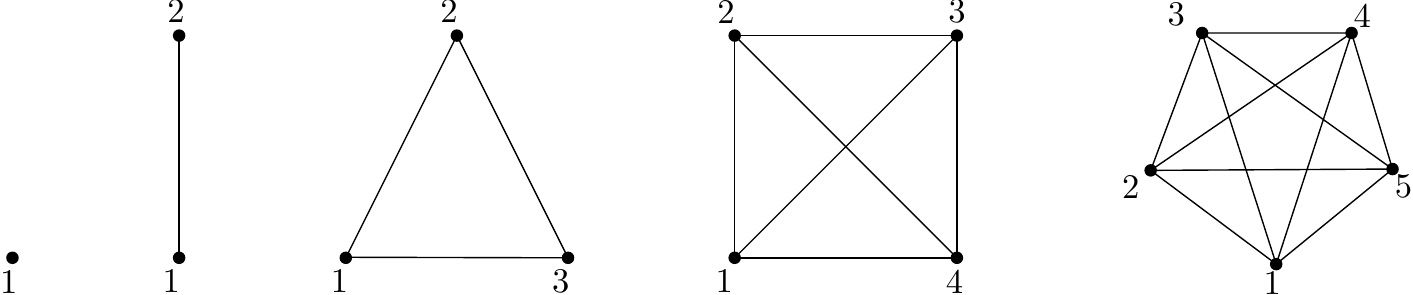}}
 \caption{Some powers of paths and cycles.}
 \label{fig:PQ_1-6_2}
 \end{figure}

\begin{definition}\label{def:indipendent subset}
An \emph{independent set of a graph $\G$} is a subset of $V(\G)$
not containing adjacent vertices.
\end{definition}

Let $\hp{n}{h}$, and $\hc{n}{h}$ be the Hasse diagrams  of the posets of independent sets of $\pa{n}{h}$, and $\cy{n}{h}$, respectively,
ordered by inclusion. Clearly, $\hp{n}{0}\cong\hc{n}{0}$ is a Boolean lattice with $n$ atoms ($n$-cube, for short).

\medskip
Before introducing the main results of the paper, we now
provide some background on Fibonacci and Lucas cubes.
Every independent set
$S$ of $\pa{n}{h}$ can be represented by a binary string $b_1 b_2 \cdots b_n$,
where, for $i\in\{1,\dots,n\}$, $b_i = 1$ if and only if
$v_i \in S$.
Specifically, each
independent set of $\pa{n}{h}$ is associated with a binary string of length
$n$ such that the distance between any two $1$'s of the string is greater than $h$.
Following \cite{fibocubes_enum} (see also \cite{survey_fibo}),
a \emph{Fibonacci string of order $n$} is a binary strings of length $n$ without
consecutive $1$'s. Recalling that the Hamming distance between two binary strings $\alpha$ and $\beta$
is the number $H(\alpha,\beta)$ of bits where $\alpha$ and $\beta$ differ, we can define
the \emph{Fibonacci cube of order $n$}, denoted $\Gamma_n$, as the graph $(V,E)$, where $V$ is the set of all
Fibonacci strings of order $n$ and, for all $\alpha,\beta\in V$, $(\alpha,\beta)\in E$ if and only if $H(\alpha,\beta)=1$.
One can observe that for $h=1$ the binary strings associated with independent sets of $\pa{n}{h}$
are \emph{Fibonacci strings of order $n$}, and the Hasse diagram of the set of all such strings ordered
bitwise (\ie, for $S=b_1 b_2 \cdots b_n$ and
$T=c_1 c_2 \cdots c_n$, $S\geq T$ if and only if
$b_i\geq c_i$, for every $i\in\{1,\dots,n\}$)
is $\Gamma_n$. Fibonacci cubes were
introduced as an interconnection scheme for multicomputers in \cite{hsu}, and their
combinatorial structure has been further investigated, \eg\, in \cite{klavzar,fibocubes_enum}.
Several generalizations of the notion of Fibonacci cubes has been proposed
(see, \eg, \cite{gen_fibo,survey_fibo}).

\begin{remark}Consider the \emph{generalized Fibonacci cubes} described in \cite{gen_fibo}, \ie, the graphs $\mathbf{B}_n(\alpha)$ obtained from the $n$-cube $\mathbf{B}_n$ of all binary strings of length $n$ by removing all vertices that contain the binary string $\alpha$ as a substring. In this notation
the Fibonacci cube is $\mathbf{B}_n(11)$. It is not difficult to see that $\hp{n}{h}$ cannot be expressed, in general, in terms of
$\mathbf{B}_n(\alpha)$. Instead we have:

\centerline{
$\hp{n}{2}=\mathbf{B}_n(11)\cap \mathbf{B}_n(101)\,,\
\hp{n}{3}=\mathbf{B}_n(11)\cap \mathbf{B}_n(101)\cap \mathbf{B}_n(1001)\,,\ \dots\,,$}
\noindent where $\mathbf{B}_n(\alpha)\cap \mathbf{B}_n(\beta)$ is the subgraph of $\mathbf{B}_n$ obtained by removing
all strings that contain either $\alpha$ or $\beta$.
\end{remark}

A similar argument can be carried out for cycles. Indeed,
every independent set
$S$ of $\cy{n}{h}$ can be represented by a circular binary string (\ie, a sequence of $0$'s and $1$'s with the first and last bits considered to be adjacent) $b_1 b_2 \cdots b_n$,
where, for $i\in\{1,\dots,n\}$, $b_i = 1$ if and only if
$v_i \in S$. Thus, each
independent set of $\cy{n}{h}$ is associated with a circular binary string of length
$n$ such that the distance between any two $1$'s of the string is greater than $h$.
A \emph{Lucas cube of order $n$}, denoted $\Lambda_n$, is defined as the graph whose vertices
are the binary strings of length $n$ without either two consecutive $1$'s or a $1$ in the first
and in the last position, and in which the vertices are adjacent when their Hamming distance
is exactly $1$ (see \cite{lucas}). For $h=1$ the Hasse diagram of the set of all circular binary strings
associated with independent sets of $\cy{n}{h}$ ordered bitwise is $\Lambda_n$.
A generalization of the notion of Lucas cubes has been proposed
in \cite{gen_lucas}.
\begin{remark}Consider the \emph{generalized Lucas cubes} described in \cite{gen_lucas}, that is, the graphs $\mathbf{B}_n(\widehat{\alpha})$ obtained from the $n$-cube $\mathbf{B}_n$ of all binary strings of length $n$ by removing all vertices that have a \emph{circular containing} $\alpha$ as a substring (\ie, such that $\alpha$ is contained in the circular binary strings obtained by connecting first and last bits of the string). In this notation
the Lucas cube is $\mathbf{B}_n(\widehat{11})$. It is not difficult to see that $\hc{n}{h}$ cannot be expressed, in general, in terms of
$\mathbf{B}_n(\widehat{\alpha})$. Instead we have:

\smallskip\centerline{
$\hc{n}{2}=\mathbf{B}_n(\widehat{11})\cap \mathbf{B}_n(\widehat{101})\,,\
\hc{n}{3}=\mathbf{B}_n(\widehat{11})\cap \mathbf{B}_n(\widehat{101})\cap \mathbf{B}_n(\widehat{1001})\,,\ \dots$}
\end{remark}

To the best of our knowledge, our $\hp{n}{h}$, and $\hc{n}{h}$ are new generalizations of Fibonacci and Lucas cubes, respectively.

\medskip
In the first part of this paper (which is an extended version of
\cite{endm} --- see remark at the end of this section)
we evaluate $p_n^{(h)}$, \ie, the number of \emph{independent sets}
of $\pa{n}{h}$, and $H_n^{(h)}$, \ie, the number of edges of $\hp{n}{h}$.
We then introduce a generalization of the Fibonacci sequence, that we call
\emph{$h$-Fibonacci sequence} and denote by $\mathcal{F}^{(h)}$.
Such integer sequence is based on the values of $p_n^{(h)}$.
Our main result (Theorem \ref{th:main}) is that,
for $n,h\geq 0$, the sequence $H_n^{(h)}$ is obtained
by convolving the sequence $\mathcal{F}^{(h)}$
with itself.

\medskip
In the second part we deal with power of cycles, and derive similar  results for this case.
Specifically, we compute $c_n^{(h)}$, \ie, the number of \emph{independent sets}
of $\cy{n}{h}$, and $M_n^{(h)}$, \ie, the number of edges of $\hc{n}{h}$.
Further, we introduce a generalization of the Lucas sequence, that we call
\emph{$h$-Lucas sequence} and denote by $\mathcal{L}^{(h)}$.
Such integer sequence is based on the values of $c_n^{(h)}$.
The analogous of Theorem \ref{th:main} in the Lucas case
(Theorem \ref{th:main_cycles}) states that,
for $n>h\geq 0$, the sequence $M_n^{(h)}$ is obtained
by an appropriate convolution between the sequences $\mathcal{F}^{(h)}$
and $\mathcal{L}^{(h)}$.

\begin{remark}
As mentioned before, this work is an extended version of \cite{endm}.
(The extended abstract \cite{endm} has been presented at the conference
Combinatorics 2012, Perugia, Italy, 2012, and consequently published in ENDM.) Specifically,
Section \ref{sec:ind sub of paths} of the present paper is \cite[Section 2]{endm} enriched with
some details and full proofs, while Section \ref{sec:hasse of paths} is \cite[Section 3]{endm}, with
corrected notation and some remarks added.
Lemma \ref{lem:T formula}, Lemma \ref{lem:somma tabelle T}, and Theorem \ref{th:main}
have remained as they were in \cite{endm}: we decided to keep full proofs of these results, to make
the content clearer and in order for the paper to be self-contained.
\end{remark}

\section{The independent sets of powers of paths}\label{sec:ind sub of paths}

For $n, h, k\geq 0$, we denote by $p_{n,k}^{(h)}$ the number of  independent
$k$-subsets of $\pa{n}{h}$.
\begin{remark}
$p_{n,k}^{(1)}$ counts the number of binary strings $\alpha\in \Gamma_n$
with $k$ $1$'s.
\end{remark}
\begin{lemma}
\label{lem:p_nk formula}
For $n, h, k \geq 0$,
\[
p_{n,k}^{(h)} = \binom{n-hk+h}{k}\ .
\]
\end{lemma}
This is Theorem 1 of \cite{hoggatt}. An alternative proof follows.
\begin{proof}
By Definition \ref{def:indipendent subset},
any two elements $v_i,v_j$ of an independent set of $\pa{n}{h}$
must satisfy $|j-i|>h$. It is straightforward to check that
whenever $n- hk+h<0$, $p_{n,k}^{(h)} = 0 = \binom{n-hk+h}{k}$.
It is also immediate to see that when $n=h=0$ our lemma holds true.

Suppose now $n- hk+h\geq 0$. We can complete the proof of our lemma by establishing a bijection
between independent $k$-subset of $\pa{n}{h}$ and $k$-subsets
of a set with $(n-hk+h)$ elements. Let $\mathscr{K}$ be the set of all $k$-subsets of a set $B = \{b_{1},b_{2},\dots,b_{n-hk+h}\}$, and
$\mathcal{I}_k$ the set of all independent $k$-subsets of $\pa{n}{h}$.
Consider the map $f: \mathscr{K} \to \mathcal{I}_k$ such that, for any $S=\{b_{i_1},b_{i_2},\dots, b_{i_k}\} \in \mathscr{K}$, with $1\leq i_1<i_2<\cdots <i_k \leq n-hk+h$,
\[
f(\{b_{i_1},b_{i_2},\dots, b_{i_j}, \dots, b_{i_k}\}) = \{v_{i_1},v_{i_2 + h},\dots, v_{i_j + (j-1)h}, \dots, v_{i_k + (k-1)h}\}\,.
\]

\begin{claim}\label{claim:1} The map $f$ associates an independent $k$-subset of $\pa{n}{h}$
with each $k$-subset $S=\{b_{i_1}, b_{i_2}, \dots, b_{i_k}\} \in \mathscr{K}$
.
\end{claim}
To see this we first remark that $f(S)$ is a $k$-subset of $V(\pa{n}{h})$.
Furthermore, for each pair $b_{i_j},b_{i_{j+t}}\in S$, with $t>0$, we have
$$i_{j+t} + (j+t-1)h - (i_j + (j-1)h) = i_{j+t} - i_j + th > h\,.$$
Hence, by Definition \ref{def:h-path}, $(f(b_{i_j}),f(b_{i_{j+t}}))=(v_{i_j + (j-1)h},v_{i_{j+t} + (j+t-1)h})\notin E(\pa{n}{h})$. Thus,
$f(S)$ is an independent set of $\pa{n}{h}$.

\begin{claim}\label{claim:2} The map $f$ is bijective.
\end{claim}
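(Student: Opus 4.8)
The plan is to prove Claim~\ref{claim:2} by exhibiting an explicit two-sided inverse of $f$, which is the cleanest route given that $f$ has already been written down as a concrete formula. First I would define a candidate inverse $g:\mathcal{I}_k \to \mathscr{K}$ that ``undoes'' the shifts introduced by $f$. Given an independent $k$-subset $T=\{v_{j_1},v_{j_2},\dots,v_{j_k}\}$ of $\pa{n}{h}$ with $j_1<j_2<\cdots<j_k$, the natural guess is
\[
g(\{v_{j_1},\dots,v_{j_k}\}) = \{b_{j_1},\,b_{j_2-h},\,\dots,\,b_{j_\ell-(\ell-1)h},\,\dots,\,b_{j_k-(k-1)h}\}\,,
\]
reversing the transformation $i_\ell \mapsto i_\ell+(\ell-1)h$ used in the definition of $f$.

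The key steps are then to verify three things. First, that $g$ is well defined, meaning the indices $j_\ell-(\ell-1)h$ are strictly increasing and lie in the range $\{1,\dots,n-hk+h\}$, so that $g(T)$ is genuinely a $k$-subset of $B$. Strict monotonicity follows from the independence condition: since $T$ is independent, consecutive elements satisfy $j_{\ell+1}-j_\ell>h$, hence $j_{\ell+1}-\ell h > j_\ell-(\ell-1)h$. The upper bound $j_k-(k-1)h \leq n-hk+h$ is equivalent to $j_k \leq n$, which holds because $v_{j_k}\in V(\pa{n}{h})$; the lower bound $j_1\geq 1$ is immediate. Second, that $g\circ f = \mathrm{id}_{\mathscr{K}}$ and $f\circ g = \mathrm{id}_{\mathcal{I}_k}$, which are routine substitutions: applying $g$ after $f$ sends $b_{i_\ell}\mapsto v_{i_\ell+(\ell-1)h}\mapsto b_{i_\ell+(\ell-1)h-(\ell-1)h}=b_{i_\ell}$, and symmetrically for the other composite, once one checks that both maps respect the increasing ordering of indices so that the $\ell$\textsuperscript{th} element really does map to the $\ell$\textsuperscript{th} element.

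I expect the main obstacle to be bookkeeping rather than any genuine difficulty: the substitutions in the two compositions only collapse correctly if the orderings are compatible, so the crux is confirming that $f$ preserves the relative order of indices (so that the element built from $i_\ell$ is indeed the $\ell$\textsuperscript{th} smallest in $f(S)$) and likewise that $g$ does. This is exactly the monotonicity computation $i_{\ell+1}+\ell h > i_\ell+(\ell-1)h$ for $f$ and its counterpart for $g$; once both orderings are pinned down, the inverse identities follow by direct cancellation of the shift terms. With a two-sided inverse in hand, $f$ is a bijection, and combining this with Claim~\ref{claim:1} and the evident fact that $|\mathscr{K}|=\binom{n-hk+h}{k}$ completes the proof of the lemma in the remaining case $n-hk+h\geq 0$.
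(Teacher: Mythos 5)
Your proposal is correct and follows essentially the same route as the paper: both exhibit the explicit inverse map $\{v_{j_1},\dots,v_{j_k}\}\mapsto\{b_{j_1},b_{j_2-h},\dots,b_{j_k-(k-1)h}\}$, with the monotonicity and range checks resting on the independence condition $j_{\ell+1}-j_\ell>h$. The only cosmetic difference is that you finish by verifying the two composite identities directly, whereas the paper concludes surjectivity of $f$ from injectivity of both $f$ and the candidate inverse; both finishes are fine.
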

It is easy to see that $f$ is injective.
Then, we consider the map $f^{-1}: \mathcal{I}_k \to \mathscr{K}$ such that, for any $S=\{v_{i_1},v_{i_2},\dots, v_{i_k}\} \in \mathcal{I}$, with $1\leq i_1<i_2<\cdots <i_k \leq n$,
\[
f^{-1}(\{v_{i_1},v_{i_2},\dots, v_{i_j}, \dots, v_{i_k}\}) = \{b_{i_1},b_{i_2 - h},\dots, b_{i_j - (j-1)h}, \dots, b_{i_k - (k-1)h}\}\,.
\]
Following the same steps as for $f$, one checks that $f^{-1}$
is injective. Thus, $f$ is surjective.

\medskip
We have established a bijection
between independent $k$-subsets of $\pa{n}{h}$ and $k$-subsets of a set with
$(n-hk+h) \geq 0$ elements. The lemma is proved.
\end{proof}

\noindent Some values of $p_{n,k}^{(h)}$ are shown in Tables \ref{tab:p_nk_1}--\ref{tab:p_nk_3} (Section \ref{sec:tables}).
The coefficients $p_{n,k}^{(h)}$ also enjoy the following property:
$p_{n,k}^{(h)} = p_{n-k+1,k}^{(h-1)}$.

\smallskip
For $n,h\geq0$, the number of independent sets of $\pa{n}{h}$ is
\[
p_n^{(h)}=\sum_{k\geq0}p_{n,k}^{(h)}=\sum_{k=0}^{\lceil n/(h+1)\rceil}p_{n,k}^{(h)}=\sum_{k=0}^{\lceil n/(h+1)\rceil}\binom{n-hk+h}{k}\,.
\]

\begin{remark} Denote by $F_n$ the $n^{th}$ element of the Fibonacci sequence: $F_1=1$, $F_2=1$, and $F_i=F_{i-1}+F_{i-2}$, for $i>2$. Then, $p_n^{(1)}=F_{n+2}$ is the number of vertices of the Fibonacci cube of order $n$.
\end{remark}


The following, simple fact is crucial for our work.

\begin{lemma}
\label{lem:p_n recurrence}
For $n, h \geq 0$,
\[
p_n^{(h)} = \begin{cases}
n+1 & \text{if }\ n \leq h+1\,, \\
p_{n-1}^{(h)}+p_{n-h-1}^{(h)} & \text{if }\ n>h+1\,.
\end{cases}
\]
\end{lemma}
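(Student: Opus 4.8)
The plan is to handle the two cases of the statement separately: the base case by a direct structural observation, and the recurrence by a case analysis on the status of the last vertex $v_n$.

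First I would dispose of the base case $n\leq h+1$. When $n\leq h+1$, any two distinct vertices $v_i,v_j$ satisfy $|i-j|\leq n-1\leq h$, so by Definition \ref{def:h-path} they are adjacent; hence $\pa{n}{h}$ is the complete graph on $n$ vertices. Its only independent sets are the empty set together with the $n$ singletons, so $p_n^{(h)}=n+1$, as claimed.

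For the recurrence ($n>h+1$), I would partition the independent sets $S$ of $\pa{n}{h}$ according to whether $v_n\in S$. If $v_n\notin S$, then $S\subseteq\{v_1,\dots,v_{n-1}\}$, and $S$ is independent in $\pa{n}{h}$ exactly when it is independent in the subgraph induced on $\{v_1,\dots,v_{n-1}\}$; by the distance condition in Definition \ref{def:h-path} this induced subgraph is precisely $\pa{n-1}{h}$, so such sets are counted by $p_{n-1}^{(h)}$. If $v_n\in S$, then every neighbour of $v_n$ --- that is, each $v_j$ with $0<n-j\leq h$, namely $v_{n-h},\dots,v_{n-1}$ --- must be absent from $S$, so $S\setminus\{v_n\}\subseteq\{v_1,\dots,v_{n-h-1}\}$ and is independent there. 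Conversely, adjoining $v_n$ to any independent set of the subgraph on $\{v_1,\dots,v_{n-h-1}\}\cong\pa{n-h-1}{h}$ produces an independent set of $\pa{n}{h}$, since $v_n$ has no neighbour among $v_1,\dots,v_{n-h-1}$. This yields a bijection, so these sets are counted by $p_{n-h-1}^{(h)}$ (here $n-h-1\geq1$ because $n>h+1$). Adding the two disjoint cases gives $p_n^{(h)}=p_{n-1}^{(h)}+p_{n-h-1}^{(h)}$.

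I expect the argument to be essentially routine. The only steps that need genuine care are the two identifications of induced subgraphs with smaller powers of paths and the exact description of the neighbourhood of $v_n$; both follow immediately from the condition that $(v_i,v_j)\in E(\pa{n}{h})$ if and only if $|i-j|\leq h$ in Definition \ref{def:h-path}, so there is no real obstacle. One could alternatively derive the recurrence from the closed form $p_{n,k}^{(h)}=\binom{n-hk+h}{k}$ of Lemma \ref{lem:p_nk formula} via Pascal's rule, but the vertex-splitting argument is cleaner and more transparent.
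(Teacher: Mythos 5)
Your proof is correct and follows essentially the same route as the paper: the complete-graph observation for $n\leq h+1$, and for $n>h+1$ the split of independent sets according to whether $v_n$ belongs to them, identifying the two classes with the independent sets of $\pa{n-1}{h}$ and $\pa{n-h-1}{h}$ respectively. Your write-up merely spells out the bijections in a little more detail than the paper does.
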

A proof of this Lemma can also be obtained using the first part
of \cite[Proof of Theorem 1]{hoggatt}.
\begin{proof}
For $n \leq h+1$, by Definition \ref{def:indipendent subset},
the independent sets of $\pa{n}{h}$ have no more than $1$ element.
Thus, there are $n+1$ independent sets of $\pa{n}{h}$.

\smallskip
Consider the case $n > h+1$.
Let $\mathcal{I}$ be
the set of all independent sets of $\pa{n}{h}$, let
$\mathcal{I}_{in}$ be the set of the independent sets of $\pa{n}{h}$
that contain $v_n$, and let $\mathcal{I}_{out}=\mathcal{I}\setminus\mathcal{I}_{in}$.
The elements
of $\mathcal{I}_{out}$ are in one-to-one correspondence with the
$p_{n-1}^{(h)}$ independent sets of $\pa{n-1}{h}$, and
those of $\mathcal{I}_{in}$ are in one-to-one correspondence with the
$p_{n-h-1}^{(h)}$ independent sets of $\pa{n-h-1}{h}$.
\end{proof}

Tables \ref{tab:p_nk} displays a few values of $p_{n}^{(h)}$.

\section{Generalized Fibonacci numbers and generalized Fibonacci cubes}
\label{sec:hasse of paths}
Figure \ref{fig:P_3-4_h} shows a few Hasse diagrams $\hp{n}{h}$. Notice that,
as stated in the introduction, for each $n$, $\hp{n}{1}$ is the Fibonacci cube $\Gamma_n$.

\begin{figure}[h!]
  \centerline{\includegraphics[scale=0.6]{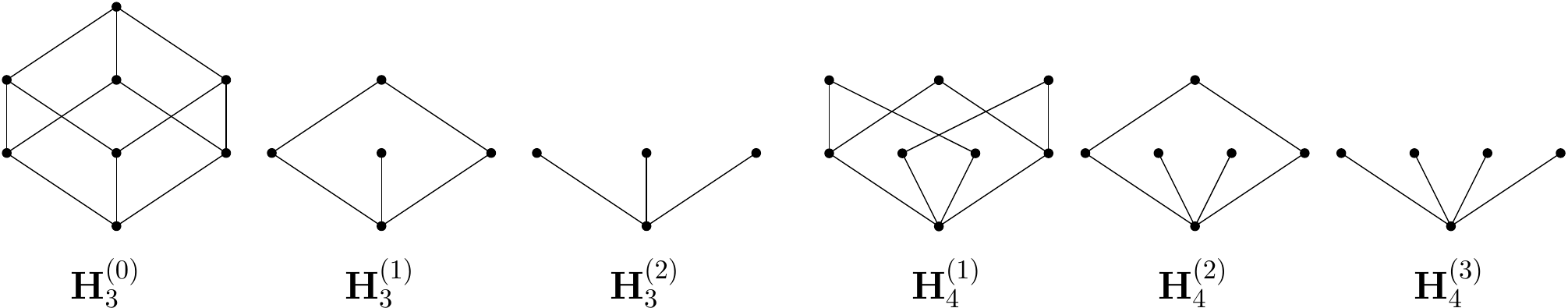}}
  \caption[Some $\hp{n}{h}$]
    {Some $\hp{n}{h}$.}
  \label{fig:P_3-4_h}
\end{figure}

Let $H_n^{(h)}$ be the number of edges of $\hp{n}{h}$.
Noting that in $\hp{n}{h}$ each non-empty independent $k$-subset covers
exactly $k$ independent $(k-1)$-subsets, we can write

\begin{equation}\label{eq:H_nh formula}
H_n^{(h)}= \sum_{k=1}^{\lceil n/(h+1)\rceil}kp_{n,k}^{(h)}= \sum_{k=1}^{\lceil n/(h+1)\rceil}k{{n-hk+h}\choose k}\ .
\end{equation}

\begin{remark}
$H_n^{(1)}$ counts the number of edges of $\Gamma_n$.
\end{remark}

Let now $T_{k,i}^{(n,h)}$ be the number of independent $k$-subsets of $\pa{n}{h}$ containing the vertex $v_i$,
and let, for $h,k\geq 0$, $n \in \mathbb{Z}$,
$\bar{p}_{n,k}^{(h)} = \begin{cases}
p_{0,k}^{(h)} & \text{if }\ n < 0\,, \\
p_{n,k}^{(h)} & \text{if }\ n \geq 0\,.
\end{cases}$

\begin{lemma}
\label{lem:T formula}
For $n,h,k \geq 0$, and $1\leq i \leq n$,
\[
T_{k,i}^{(n,h)}=\sum_{r=0}^{k-1}\, \bar{p}_{i-h-1,r}^{(h)} \ \bar{p}_{n-i-h,k-1-r}^{(h)}\,.
\]
\end{lemma}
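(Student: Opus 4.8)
The plan is to count the independent $k$-subsets of $\pa{n}{h}$ that contain $v_i$ by splitting each such set into its portion lying to the left of $v_i$ and its portion lying to the right. Fix an independent $k$-subset $S$ with $v_i \in S$. Because $v_i \in S$ and $S$ is independent, no vertex $v_j$ with $|j - i| \leq h$ can belong to $S$ apart from $v_i$ itself; thus the remaining $k-1$ vertices of $S$ all satisfy either $j \leq i-h-1$ or $j \geq i+h+1$. Write $r$ for the number of these vertices lying to the left, so that $k-1-r$ lie to the right, and let $r$ range from $0$ to $k-1$. The key observation is that the left and right portions are independent sets subject only to the internal constraint of $\pa{n}{h}$, since the separation of at least $h+1$ from $v_i$ already guarantees that neither portion interacts with $v_i$ or across it.

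First I would make precise the bijection for the left portion: the vertices available to the left are $v_1, \dots, v_{i-h-1}$, and an independent $r$-subset among them is exactly an independent $r$-subset of $\pa{i-h-1}{h}$, of which there are $p_{i-h-1,r}^{(h)}$. Symmetrically, the right portion uses vertices $v_{i+h+1}, \dots, v_n$; after relabeling by a shift this is an independent $(k-1-r)$-subset of a power of a path on $n - (i+h+1) + 1 = n - i - h$ vertices, giving $p_{n-i-h,k-1-r}^{(h)}$ choices. Since the left and right choices are independent of one another and their union with $\{v_i\}$ recovers $S$ uniquely, the multiplication principle yields $T_{k,i}^{(n,h)} = \sum_{r=0}^{k-1} p_{i-h-1,r}^{(h)}\, p_{n-i-h,k-1-r}^{(h)}$.

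The one subtlety, and the reason the barred notation $\bar{p}$ appears in the statement, is that the arguments $i-h-1$ and $n-i-h$ can be negative when $v_i$ sits within $h$ steps of an end of the path. I would handle this by checking the degenerate cases directly: when $i - h - 1 < 0$ there is simply no room to the left, so the only admissible value is $r = 0$ and the left factor should contribute $1 = p_{0,0}^{(h)}$; this is exactly what the convention $\bar{p}_{n,k}^{(h)} = p_{0,k}^{(h)}$ for $n < 0$ produces, since $p_{0,r}^{(h)}$ vanishes for $r \geq 1$ and equals $1$ for $r = 0$. The analogous check applies on the right when $n - i - h < 0$. Thus replacing $p$ by $\bar{p}$ makes the formula valid uniformly in $i$.

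The main obstacle is not any single hard computation but rather verifying that the boundary conventions encoded in $\bar{p}$ are genuinely correct for every position of $i$, including the extreme cases where both sides are truncated at once (small $n$). I would therefore organize the argument so that the generic interior case establishes the product-sum formula cleanly, and then devote a short case analysis to confirming that the $\bar{p}$ convention reproduces the right count of $1$ (empty independent set) whenever a side runs off the end of the path, so that the single displayed identity covers all $1 \leq i \leq n$ without exception.
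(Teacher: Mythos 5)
Your argument is correct and follows essentially the same route as the paper's own proof: split an independent $k$-subset containing $v_i$ into its left part in $\{v_1,\dots,v_{i-h-1}\}$ and its right part in $\{v_{i+h+1},\dots,v_n\}$, identify these with independent subsets of $\pa{i-h-1}{h}$ and $\pa{n-i-h}{h}$, and multiply. Your explicit check that the $\bar{p}$ convention correctly handles negative arguments is a welcome addition that the paper leaves implicit, but it does not change the substance of the argument.
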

\begin{proof}
No independent set of $\pa{n}{h}$ containing $v_i$ contains any of
the elements $v_{i-h}, \dots,v_{i-1},v_{i+1},\dots,v_{i+h}$.
Let $r$ and $s$ be non-negative integers whose sum is $k-1$. Each independent $k$-subset of $\pa{n}{h}$ containing $v_i$
can be obtained by adding $v_i$ to a $(k-1)$-subset $R\cup S$ such that

\noindent(a) $R\subseteq \{v_1,\dots,v_{i-h-1}\}$ is an independent $r$-subset of $\pa{n}{h}$;

\noindent(b) $S\subseteq \{v_{i+h+1},\dots,v_n\}$ is an independent $s$-subset of $\pa{n}{h}$.

\smallskip
Viceversa, one can obtain each of this pairs of subsets by removing $v_i$ from an independent
$k$-subset of $\pa{n}{h}$ containing $v_i$.
Thus, $T_{k,i}^{(n,h)}$ is obtained by counting independently
the subsets of type (a) and (b). The remark that the subsets of type (b) are
in bijection with the independent $s$-subsets of
$\pa{n-i-h}{h}$ proves the lemma.
\end{proof}

\begin{remark}
$T_{k,i}^{(n,1)}$ counts the number of strings $\alpha=b_1 b_2 \cdots b_n\in\Gamma_n$ such that:
(i) $H(\alpha,00\cdots0)=k$, and (ii) $b_i=1$.
\end{remark}

%

In order to obtain our main result, we prepare a lemma.

\begin{lemma}
\label{lem:somma tabelle T}
For positive $n$,
\[
\sum_{k=1}^{\lceil n/(h+1)\rceil}\sum_{i=1}^n T_{k,i}^{(n,h)} = H_n^{(h)}\,.
\]
\end{lemma}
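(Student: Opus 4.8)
The plan is to prove this by a double-counting (incidence-counting) argument, exploiting the fact that the inner sum over $i$ collapses in a clean way for each fixed $k$. The central observation is that for a fixed $k$, the quantity $\sum_{i=1}^n T_{k,i}^{(n,h)}$ counts pairs $(v_i, S)$ where $S$ is an independent $k$-subset of $\pa{n}{h}$ and $v_i \in S$. Counting these incidences in the opposite order—first choosing $S$, then counting its members $v_i$—gives that each independent $k$-subset contributes exactly $k$, since it has precisely $k$ vertices. Hence I would establish the key identity
\[
\sum_{i=1}^n T_{k,i}^{(n,h)} = k\,p_{n,k}^{(h)}\,,
\]
where $p_{n,k}^{(h)}$ is the number of independent $k$-subsets of $\pa{n}{h}$.

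With this identity in hand, the result follows immediately by summing over $k$ and invoking equation \eqref{eq:H_nh formula}:
\[
\sum_{k=1}^{\lceil n/(h+1)\rceil}\sum_{i=1}^n T_{k,i}^{(n,h)}
 = \sum_{k=1}^{\lceil n/(h+1)\rceil} k\,p_{n,k}^{(h)}
 = H_n^{(h)}\,.
\]
First I would state the incidence set precisely and reorganize the double count; then I would substitute into the given formula for $H_n^{(h)}$.

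There is no genuine obstacle here: the argument is essentially a change in the order of summation. The only point requiring a line of care is checking that the index ranges are consistent—that the summation over $k$ in the statement, running up to $\lceil n/(h+1)\rceil$, matches exactly the range appearing in \eqref{eq:H_nh formula}, and that the terms with $k$ outside this range vanish (indeed $p_{n,k}^{(h)} = 0$ once $k$ exceeds the maximum size of an independent set, so extending or truncating the sum is harmless). I expect to spend at most a sentence verifying that no independent $k$-subset is double-counted or omitted when the inner sum over $i$ is interpreted as an incidence count.
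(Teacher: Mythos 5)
Your proposal is correct and follows exactly the paper's own argument: the inner sum is recognized as counting each independent $k$-subset once per element, giving $\sum_{i=1}^n T_{k,i}^{(n,h)} = k\,p_{n,k}^{(h)}$, after which Equation~\eqref{eq:H_nh formula} finishes the proof. No differences worth noting.
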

\begin{proof}
The inner sum
counts the number of $k$-subsets exactly $k$ times, one for each element of the subset.
That is,
$\sum_{i=1}^n T_{k,i}^{(n,h)} = k p_{n,k}^{(h)}$.
Hence the lemma follows directly from Equation (\ref{eq:H_nh formula}).
\end{proof}

Next we introduce a family
of Fibonacci-like sequences.

\begin{definition}\label{def:gen fibonacci}
For $h\geq 0$, and $n\geq 1$, we define the \emph{$h$-Fibonacci sequence}
$\mathcal{F}^{(h)}=\{F_n^{(h)}\}_{n\geq 1}$ whose elements are
\[
F_n^{(h)} = \begin{cases}
1 & \text{if }\ n \leq h+1\,, \\
F_{n-1}^{(h)}+F_{n-h-1}^{(h)} & \text{if }\ n>h+1.
\end{cases}
\]
\end{definition}

The first values of the \emph{$h$-Fibonacci sequences}, for
$h\in\{1,\dots,10\}$, are shown in Table \ref{tab:F_n}.
From Lemma \ref{lem:p_n recurrence}, and setting
for $h\geq 0$, and $n \in \mathbb{Z}$,
$\bar{p}_{n}^{(h)} = \begin{cases}
p_{0}^{(h)} & \text{if }\ n < 0\,, \\
p_{n}^{(h)} & \text{if }\ n \geq 0\,,
\end{cases}$
we have that,
\begin{equation}\label{eq:F and p}
F_i^{(h)}=\bar{p}_{i-h-1}^{(h)}\,, \ \ \text{for each}\ \  i\geq 1\,.
\end{equation}

\smallskip
Thus, our Fibonacci-like sequences are obtained by prepending $h$
$1$'s to the sequence $p_{0}^{(h)},p_{1}^{(h)},\dots\ $. Therefore, we have:
\begin{itemize}
\item $\mathcal{F}^{(0)}=1,2,4,\dots,2^{n},\dots$;
\item $\mathcal{F}^{(1)}$ is the Fibonacci sequence;
\item more generally, $\mathcal{F}^{(h)}=\underbrace{1,\dots,1}_{h},p_{0}^{(h)},p_{1}^{(h)},p_{2}^{(h)},\dots$.
\end{itemize}

In the following, we define the discrete convolution operation $\ast$, as follows.
\begin{equation}\label{eq:convolution}
\left(\mathcal{F}^{(h)}\ast \mathcal{F}^{(h)}\right)(n)\doteq \sum_{i=1}^n F_{i}^{(h)} F_{n-i+1}^{(h)}
\end{equation}

\begin{theorem}
\label{th:main}
For $n,h\geq 0$, the following holds
\[
H_n^{(h)} = \left(\mathcal{F}^{(h)}\ast \mathcal{F}^{(h)}\right)(n)\,.
\]
\end{theorem}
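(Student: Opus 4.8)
The plan is to combine the two lemmas already established with the identity \eqref{eq:F and p} relating the $h$-Fibonacci numbers to the counts $p_n^{(h)}$. By Lemma \ref{lem:somma tabelle T}, for positive $n$ we have $H_n^{(h)} = \sum_{k}\sum_{i=1}^n T_{k,i}^{(n,h)}$, the case $n=0$ being trivial since both sides vanish. The first step is therefore to substitute the explicit expression for $T_{k,i}^{(n,h)}$ furnished by Lemma \ref{lem:T formula}, turning $H_n^{(h)}$ into a triple sum over $k$, $i$, and $r$. Since $\bar p_{m,r}^{(h)}=0$ for $r$ large, the upper limit $\lceil n/(h+1)\rceil$ on $k$ may be relaxed to $+\infty$ without changing the value, which streamlines the reindexing to come.

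The heart of the argument is to reorganize this triple sum. I would move the summation over $i$ to the outside and, for each fixed $i$, examine the double sum over $k \ge 1$ and $0 \le r \le k-1$ of $\bar p_{i-h-1,r}^{(h)}\,\bar p_{n-i-h,k-1-r}^{(h)}$. Writing $s = k-1-r$, the pair $(r,s)$ runs over \emph{all} pairs of nonnegative integers as $k$ and $r$ vary (given $(r,s)$, recover $k=r+s+1$), so the double sum factors as $\big(\sum_{r\ge0}\bar p_{i-h-1,r}^{(h)}\big)\big(\sum_{s\ge0}\bar p_{n-i-h,s}^{(h)}\big)$. Summing $\bar p_{m,k}^{(h)}$ over $k$ yields the total number of independent sets $\bar p_m^{(h)}$ (note $\sum_k p_{0,k}^{(h)} = 1 = p_0^{(h)}$, so the convention for negative indices is consistent), and each factor collapses to a single count, giving $H_n^{(h)} = \sum_{i=1}^n \bar p_{i-h-1}^{(h)}\,\bar p_{n-i-h}^{(h)}$.

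The final step is to translate the $\bar p$-values into $h$-Fibonacci numbers via \eqref{eq:F and p}, namely $F_j^{(h)} = \bar p_{j-h-1}^{(h)}$. Directly, $\bar p_{i-h-1}^{(h)} = F_i^{(h)}$; and matching indices by solving $j-h-1 = n-i-h$ gives $j = n-i+1$, whence $\bar p_{n-i-h}^{(h)} = F_{n-i+1}^{(h)}$. Substituting yields $H_n^{(h)} = \sum_{i=1}^n F_i^{(h)} F_{n-i+1}^{(h)}$, which is exactly $(\mathcal F^{(h)} \ast \mathcal F^{(h)})(n)$ by \eqref{eq:convolution}.

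I expect the only delicate point to be the reindexing in the middle step: one must verify carefully that, once $k$ is also summed, the constraint $0 \le r \le k-1$ imposes no restriction on the pair $(r, k-1-r)$ beyond nonnegativity, so that the convolution-type inner sum genuinely separates into the product of two independent total counts. The accompanying bookkeeping with the $\bar p$ convention for negative subscripts is routine but must be handled consistently, so that the boundary terms (small or negative values of $i-h-1$ and of $n-i-h$) are correctly read off as $F$-values.
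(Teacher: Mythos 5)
Your proof is correct and follows essentially the same route as the paper: swap the two summations of Lemma \ref{lem:somma tabelle T}, reduce the inner sum to $\bar{p}_{i-h-1}^{(h)}\,\bar{p}_{n-i-h}^{(h)}$, and conclude via Equation \eqref{eq:F and p}. The only (harmless) variation is in the middle step, where the paper obtains $\sum_{k} T_{k,i}^{(n,h)} = \bar{p}_{i-h-1}^{(h)}\,\bar{p}_{n-i-h}^{(h)}$ by a direct counting argument on the independent sets containing $v_i$, while you derive the same identity algebraically from Lemma \ref{lem:T formula} by a Cauchy-product reindexing; your care with the $\bar{p}$ convention at negative indices is exactly the bookkeeping needed.
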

\begin{proof}
The sum $\sum_{k=1}^{\lceil n/(h+1)\rceil} T_{k,i}^{(n,h)}$
counts the number of independent sets of $\pa{n}{k}$ containing $v_i$.
We can also obtain such a value by counting
the independent sets of both $\{v_1, \dots, v_{i-h-1}\}$, and $\{v_{i+h+1}, \dots, v_{n}\}$.
Thus,  we have:
\[
\sum_{k=1}^{\lceil n/(h+1)\rceil} T_{k,i}^{(n,h)} = \bar{p}_{i-h-1}^{(h)}\, \bar{p}_{n-h-i}^{(h)}\,.
\]
Using Lemma \ref{lem:somma tabelle T} we can write
\[
H_n^{(h)} = \sum_{k=1}^{\lceil n/(h+1)\rceil}\sum_{i=1}^n T_{k,i}^{(n,h)} =
\sum_{i=1}^n\sum_{k=1}^{\lceil n/(h+1)\rceil} T_{k,i}^{(n,h)} =
\sum_{i=1}^n \bar{p}_{i-h-1}^{(h)}\, \bar{p}_{n-h-i}^{(h)}.
\]
By Equation (\ref{eq:F and p}) we have
$\sum_{i=1}^n \bar{p}_{i-h-1}^{(h)}\, \bar{p}_{n-h-i}^{(h)}=\sum_{i=1}^n F_{i}^{(h)} F_{n-i+1}^{(h)}\,.$
By (\ref{eq:convolution}), the theorem is proved.
\end{proof}

We display some values of $H_n^{(h)}$ in Table \ref{tab:H_n} (Section \ref{sec:tables}).

\begin{remark} For $h=1$, we obtain the number of edges of $\Gamma_n$ by using Fibonacci numbers:
\[
H_n^{(1)} = \sum_{i=1}^n F_{i} F_{n-i+1}\,.
\]
The latter result is \cite[Proposition 3]{mediannature}.
\end{remark}

%
%
%

\section{The independent sets of powers of cycles}\label{sec:ind sub of cycles}

For $n, h, k\geq 0$, we denote by $c_{n,k}^{(h)}$ the number of  independent
$k$-subsets of $\cy{n}{h}$.
\begin{remark}
For $n>1$, $c_{n,k}^{(1)}$ counts the number of binary strings $\alpha\in \Lambda_n$
with $k$ $1$'s.
\end{remark}
\begin{lemma}\label{lem:q_nk formula}
For $n, h\geq 0$, and $k>1$,
\[
c_{n,k}^{(h)} = \frac{n}{k}\binom{n-hk-1}{k-1}\ .
\]
Moreover, $c_{n,0}^{(h)}=1$, and $c_{n,1}^{(h)}=n$, for each $n, h\geq 0$.
\end{lemma}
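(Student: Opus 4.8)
The two base cases are immediate: the empty set is the unique independent $0$-subset, so $c_{n,0}^{(h)}=1$, and every singleton $\{v_i\}$ is independent, so $c_{n,1}^{(h)}=n$. For the main formula I would fix $k>1$ and argue by a double count of \emph{flagged} independent sets, that is, of pairs $(v_i,C)$ where $C$ is an independent $k$-subset of $\cy{n}{h}$ and $v_i\in C$. Summing over $C$ first gives $k\,c_{n,k}^{(h)}$, since each such $C$ contributes exactly $k$ flags.

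Summing instead over the distinguished vertex, the plan is to exploit the rotational symmetry of the cycle. The map $v_i\mapsto v_{i+1}$ (indices mod $n$) is an automorphism of $\cy{n}{h}$ by Definition~\ref{def:h-cycle}, so the number of independent $k$-subsets containing a given vertex does not depend on the vertex; call this common number $N$. Then the flag count also equals $nN$, whence $c_{n,k}^{(h)}=\frac{n}{k}\,N$, and it remains only to evaluate $N$ as the number of independent $k$-subsets of $\cy{n}{h}$ that contain $v_1$.

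To compute $N$, I would delete $v_1$ together with its $2h$ neighbours $v_2,\dots,v_{h+1}$ and $v_{n-h+1},\dots,v_n$. The surviving candidates $v_{h+2},\dots,v_{n-h}$ number $n-2h-1$, and I claim the subgraph they induce is exactly $\pa{n-2h-1}{h}$: for indices $i,j$ in this range one has $|j-i|\le n-2h-2<n-h$, so the wrap-around adjacency of Definition~\ref{def:h-cycle} never triggers and adjacency reduces to $|j-i|\le h$. Hence the $k-1$ remaining elements of such a set form an independent $(k-1)$-subset of $\pa{n-2h-1}{h}$, giving $N=p_{n-2h-1,k-1}^{(h)}$. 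Lemma~\ref{lem:p_nk formula} then yields $N=\binom{(n-2h-1)-h(k-1)+h}{k-1}=\binom{n-hk-1}{k-1}$, and substituting into $c_{n,k}^{(h)}=\frac{n}{k}\,N$ completes the proof.

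The main obstacle is the symmetry step: I must check carefully that the cyclic rotation is a genuine automorphism, so that ``the number of independent $k$-subsets through $v_i$'' is indeed independent of $i$, since this is precisely what turns the awkward factor $n/k$ into a legitimate count. A secondary point requiring care is the range of validity together with the binomial convention for small $n$: when $n<k(h+1)$ there are no independent $k$-subsets, and I would confirm that the formula correctly returns $0$ there. This is automatic once $N$ is identified with $p_{n-2h-1,k-1}^{(h)}$, because Lemma~\ref{lem:p_nk formula} already incorporates the convention that the binomial coefficient vanishes when its top argument is too small.
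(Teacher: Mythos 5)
Your proposal is correct and follows essentially the same route as the paper: count the independent $k$-subsets through a fixed vertex as $p_{n-2h-1,k-1}^{(h)}$ by deleting that vertex and its $2h$ neighbours, then multiply by $n$ and divide by $k$ to correct for the $k$-fold overcount. You merely make explicit two points the paper leaves implicit, namely the rotational symmetry justifying the uniform count over vertices and the verification that the surviving vertices induce $\pa{n-2h-1}{h}$ (the paper also sets aside the degenerate case $n\leq 2h$, which your binomial-convention remark covers only after noting that the induced-path argument needs $n>2h$).
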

\begin{proof}
Fix an element $v_i \in V(\cy{n}{h})$, and let $n>2h$.
Any independent set of $\cy{n}{h}$
containing $v_i$ does not contain the $h$ elements preceding  $v_i$ and the $h$
elements following $v_i$. Thus, the number of
independent $k$-subsets of $\cy{n}{h}$ containing $v_i$ equals
\[
p_{n-2h-1,k-1}^{(h)}=\binom{n-hk-1}{k-1}\ .
\]
The total number of independent $k$-subsets of $\cy{n}{h}$ is obtained
by multiplying $p_{n-2h-1,k-1}^{(h)}$ by $n$, then dividing it by $k$ (each
subset is counted $k$ times by the previous proceeding).
The case $n\leq 2h$, as well as the cases $k=0,1$, can be easily verified.
\end{proof}

\noindent Some values of $c_{n,k}^{(h)}$ are displayed in Tables \ref{tab:q_nk_1}--\ref{tab:q_nk_3}.

For $n,h\geq0$, the number of  all independent sets of $\cy{n}{h}$ is
\begin{equation}\label{eq:q_n def}
c_n^{(h)}=\sum_{k\geq0}c_{n,k}^{(h)}=\sum_{k=0}^{\lceil n/(h+1)\rceil}c_{n,k}^{(h)}\ ,
\end{equation}
\begin{remark} Denote by $L_n$ the $n^{th}$ element of the Lucas sequence $L_1=1$, $L_2=3$, and $L_i=L_{i-1}+L_{i-2}$, for $i>2$. Then, for $n>1$, $c_n^{(1)}=L_{n}$ is the number of elements of the Lucas cube of order $n$.
\end{remark}

Some values of $c_{n}^{(h)}$ are shown in Table \ref{tab:q_nk}.
The coefficients $c_n^{(h)}$ satisfy a recursion that closely resembles that of Lemma 2.2.
\begin{lemma}
\label{lem:q_n recurrence}
For $n, h \geq 0$,
\begin{equation}\label{eq:q_n recurrence}
c_n^{(h)} = \begin{cases}
n+1 & \text{if }\ n\leq 2h+1\,, \\
c_{n-1}^{(h)}+c_{n-h-1}^{(h)} & \text{if }\ n>2h+1.
\end{cases}
\end{equation}
\end{lemma}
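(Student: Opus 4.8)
The plan is to mirror the structure of the proof of Lemma~\ref{lem:p_n recurrence} as closely as possible, exploiting the parallelism the authors have already flagged. First I would dispose of the base case $n \leq 2h+1$: by Definition~\ref{def:indipendent subset} together with the cycle adjacency rule of Definition~\ref{def:h-cycle}(ii), any two distinct vertices $v_i, v_j$ of $\cy{n}{h}$ are adjacent precisely when $|j-i| \leq h$ or $|j-i| \geq n-h$. When $n \leq 2h+1$, every pair of distinct indices satisfies one of these two conditions, so $\cy{n}{h}$ is a complete graph and its only independent sets are the empty set and the $n$ singletons, giving $c_n^{(h)} = n+1$ as claimed.

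For the recursive case $n > 2h+1$, the obstacle is that, unlike the path, a cycle has no distinguished endpoint, so the clean ``$v_n$ in or out'' partition of Lemma~\ref{lem:p_n recurrence} does not transfer directly; deleting $v_n$ from a cycle leaves a path, not a smaller cycle. The plan is therefore to split the independent sets $\mathcal{I}$ of $\cy{n}{h}$ according to whether they contain the fixed vertex $v_n$. Write $\mathcal{I}_{in}$ and $\mathcal{I}_{out} = \mathcal{I} \setminus \mathcal{I}_{in}$. For $\mathcal{I}_{in}$, an independent set containing $v_n$ must avoid the $2h$ vertices $v_1, \dots, v_h$ and $v_{n-h}, \dots, v_{n-1}$ that are adjacent to $v_n$ in the cycle; the remaining choices range over $\{v_{h+1}, \dots, v_{n-h-1}\}$, which as an induced subgraph is exactly a copy of $\pa{n-2h-1}{h}$. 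Hence $|\mathcal{I}_{in}| = p_{n-2h-1}^{(h)}$, and by the shift identity~(\ref{eq:F and p}), or directly by Lemma~\ref{lem:p_n recurrence}, this should be reconciled with the target term $c_{n-h-1}^{(h)}$.

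For $\mathcal{I}_{out}$, the independent sets not containing $v_n$ are exactly the independent sets of the graph obtained by deleting $v_n$, and I would argue that this graph is $\cy{n-1}{h}$ \emph{up to the adjacencies lost at the ``seam''}; care is needed here, since removing one vertex from $\cy{n}{h}$ yields a power of a path on $n-1$ vertices rather than a power of a cycle. The cleaner route, and the one I expect to work, is to not delete $v_n$ from the cycle but to set up a bijection between $\mathcal{I}_{out}$ and the independent sets of $\cy{n-1}{h}$ directly, relabelling $v_1, \dots, v_{n-1}$ and checking that the circular adjacency $|j-i| \geq n-1-h$ of the smaller cycle matches the inherited constraints; this gives $|\mathcal{I}_{out}| = c_{n-1}^{(h)}$. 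Combining the two counts yields $c_n^{(h)} = c_{n-1}^{(h)} + c_{n-h-1}^{(h)}$.

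The main obstacle, then, is exactly the $\mathcal{I}_{in}$ bookkeeping: verifying that the $2h$ forbidden neighbours of $v_n$ leave precisely an induced $\pa{n-2h-1}{h}$ and that $p_{n-2h-1}^{(h)}$ equals $c_{n-h-1}^{(h)}$. The latter equality is not obvious term-by-term and is the crux; I would establish it by showing both quantities satisfy the same recurrence with the same initial conditions, or alternatively by invoking the already-proven per-level count $c_{n,k}^{(h)} = p_{n-2h-1,k-1}^{(h)} \cdot (n/k)$ from Lemma~\ref{lem:q_nk formula} and summing. Since the authors explicitly note the recursion ``closely resembles that of Lemma~2.2,'' I anticipate their proof leans on precisely this in/out decomposition, with the $v_n$-containing sets reducing to independent sets of a shorter path.
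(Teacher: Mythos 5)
Your base case is fine, but the recursive step has a genuine gap: the plain ``$v_n$ in / $v_n$ out'' partition does not produce the two terms of the recurrence, and \emph{both} of your identifications fail. The graph obtained from $\cy{n}{h}$ by deleting $v_n$ is not $\cy{n-1}{h}$: compared with $\cy{n-1}{h}$ it is missing the $h$ edges $(v_1,v_{n-h}),\dots,(v_h,v_{n-1})$, since for $n>2h+1$ each such pair has index difference $n-h-1$, which is $<n-h$ and $>h$ (so non-adjacent in $\cy{n}{h}$) but $\geq (n-1)-h$ (so adjacent in $\cy{n-1}{h}$). Hence the independent sets avoiding $v_n$ strictly outnumber those of $\cy{n-1}{h}$, and the ``relabelling'' bijection you propose does not exist. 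Likewise the crux identity you flag, $p_{n-2h-1}^{(h)}=c_{n-h-1}^{(h)}$, is simply false. Concretely, for $h=1$, $n=5$: $c_5^{(1)}=11$; the sets containing $v_5$ number $p_2^{(1)}=3$ while $c_3^{(1)}=4$; the sets avoiding $v_5$ number $8$ while $c_4^{(1)}=7$ (the set $\{v_1,v_4\}$ is independent in $\cy{5}{1}$ but not in $\cy{4}{1}$). The two discrepancies cancel in the total, which is precisely why the naive split cannot be repaired term by term.

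The paper's proof partitions around this seam problem differently. One class consists of the independent sets that neither contain $v_n$ nor contain any of the pairs $(v_1,v_{n-h}),\dots,(v_h,v_{n-1})$; these are exactly the independent sets of $\cy{n-1}{h}$, giving $c_{n-1}^{(h)}$. The complementary class consists of the sets that contain $v_n$ \emph{or} one of those $h$ pairs; at level $k$ these number $p_{n-2h-1,k-1}^{(h)}+h\,p_{n-3h-2,k-2}^{(h)}$ (your count of the $v_n$-containing sets via an induced $\pa{n-2h-1}{h}$ is the first summand, and a similar reduction to $\pa{n-3h-2}{h}$ handles each seam pair), and the closed forms of Lemmas \ref{lem:p_nk formula} and \ref{lem:q_nk formula} show this sum equals $c_{n-h-1,k-1}^{(h)}$; summing over $k$ gives $c_{n-h-1}^{(h)}$. (The paper also treats the range $2h+1<n\leq 3h+2$ separately.) To salvage your outline you would have to move the seam-pair sets out of your ``does not contain $v_n$'' class and into the class that is matched with $c_{n-h-1}^{(h)}$, and redo both counts accordingly.
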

\begin{proof}
The case $n\leq 2h+1$ can be easily checked. The case $2h+1< n \leq 3h+2$
is discussed at the end of this proof.
Let $n> 3h+2$, and let $\mathcal{I}$ be the set of the independent sets of $\cy{n}{h}$. Let $\mathcal{I}_{in}$ be
the subset of these sets that (i) do not contain $v_n$, and that  (ii) do not contain
any of the following pairs: $(v_1, v_{n-h}), (v_2, v_{n-h+1}), \dots, (v_h, v_{n-1})$. Let
then $\mathcal{I}_{out}$ be the subset of the remaining independent sets of $\cy{n}{h}$.

\smallskip
It is easy to see that the elements of $\mathcal{I}_{in}$ are exactly the independent sets
of $\cy{n-1}{h}$. Indeed, $v_n$ is not a vertex of $\cy{n-1}{h}$ and the vertices
of pairs $(v_1, v_{n-h})$, $(v_2, v_{n-h+1})$, $\dots$, $(v_h, v_{n-1})$ are adjacent in $\cy{n-1}{h}$.
On the other hand, to show that
\[
|\mathcal{I}_{out}|=c_{n-h-1}^{(h)}
\]
we argue as follows. First we recall (see the proof of Lemma \ref{lem:q_nk formula})
that the number of independent $k$-subsets of
$\cy{n}{h}$ that contain $v_n$ is $p_{n-2h-1,k-1}^{(h)}$.
Secondly we claim that the number of independent $k$-subsets
of $\cy{n}{h}$ containing one of the pairs $(v_1, v_{n-h})$, $(v_2, v_{n-h+1})$, $\dots$,
$(v_h, v_{n-1})$ is $h p_{n-3h-2,k-2}^{(h)}$. To see this, consider the pair $(v_1,v_{n-h})$. The independent
sets containing such a pair do not contain the $h$ vertices from $v_n-h+1$ to $v_n$, do
not contain the $h$ vertices from $v_2$ to $v_{h+1}$, and do not contain the $h$ vertices from
$v_{n-2h}$ to $v_{n-h-1}$. Thus, the removal of such vertices and of the vertices $v_1$ and $v_{n-h}$
turns $\cy{n}{h}$ into $\pa{n-3h-2}{h}$. Hence we can obtain all the independent $k$-subsets of
$\cy{n}{h}$ that contain the pair $(v_1,v_{n-h})$ by simply adding these two vertices to one of the
$p_{n-3h-2,k-2}^{(h)}$ independent $k-2$-subsets of $\pa{n-3h-2}{h}$. Same reasoning
can be carried out for any other one of the pairs: $(v_2, v_{n-h+1})$, $\dots$,
$(v_h, v_{n-1})$.

\smallskip
Using Lemmas \ref{lem:p_nk formula} and \ref{lem:q_nk formula} one can easily derive
that
\[
p_{n-2h-1,k-1}^{(h)}+hp_{n-3h-2,k-2}^{(h)}=c_{n-h-1,k-1}^{(h)}\,.
\]
Hence, we derive the size of $\mathcal{I}_{out}$:
\[
|\mathcal{I}_{out}|=c_{n-h-1}^{(h)}=\sum_{k\geq 1}p_{n-2h-1,k-1}^{(h)}\ + h\sum_{k\geq2}p_{n-3h-2,k-2}^{(h)}\,.
\]
Summing up we have shown that $|\mathcal{I}|=|\mathcal{I}_{in}|+|\mathcal{I}_{out}|$, that is
\[
c_n^{(h)}=c_{n-1}^{(h)}+c_{n-h-1}^{(h)}\,.
\]

The proof of the case $2h+1< n \leq 3h+2$
is obtained in a similar way, observing that $|\mathcal{I}_{out}|=n-h$, and that
$n-h-1\leq 2h+1$.
%
%
%
%
%
%
%
%
%
%
%
%
%
%
%
\end{proof}


\section{Generalized Lucas cubes and Lucas numbers}
\label{sec:hasse of cycles}

Figure \ref{fig:Q_3-4_h} shows a few Hasse diagrams $\hc{n}{h}$. Notice that,
as stated in the introduction, for each $n\geq 1$, $\hc{n}{1}$ is the Lucas cube $\Lambda_n$.

\begin{figure}[h!]
  \centerline{\includegraphics[scale=0.6]{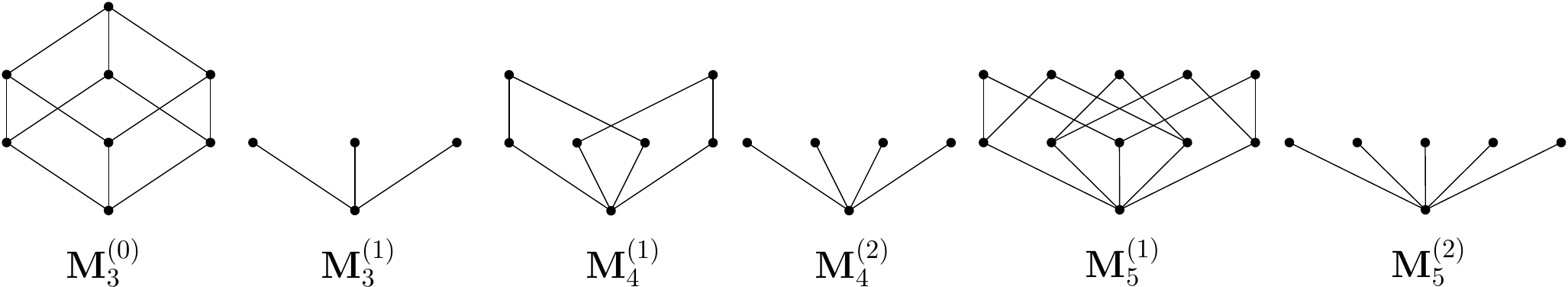}}
  \caption[Some $\hc{n}{h}$]
    {Some $\hc{n}{h}$.}
  \label{fig:Q_3-4_h}
\end{figure}

Let $M_n^{(h)}$ be the number of edges of $\hc{n}{h}$.
As done in Section \ref{sec:hasse of paths} for the case of paths, we immediately provide
a formula for $M_n^{(h)}$:
\begin{equation}\label{eq:M_nh formula}
M_n^{(h)}= \sum_{k=0}^{\lceil n/h+1 \rceil}kc_{n,k}^{(h)} = n \sum_{k=0}^{\lceil n/h+1 \rceil}{\binom{n-hk-1}{k-1}}\ .
\end{equation}
\begin{remark}
For $n>1$, $M_n^{(1)}$ counts the number of edges of $\Lambda_n$.
As shown in \cite[Proposition 4(ii)]{lucas}, $M_n^{(1)}=n F_{n-1}$.
\end{remark}
\smallskip
As shown in the proof of Lemma \ref{lem:q_nk formula}, the value
\[
p_{n-2h-1,k-1}^{(h)}=\binom{n-hk-1}{k-1}
\]
is the analogue of the coefficient $T_{k,i}^{(n,h)}$:  in the case of cycles we have no dependencies on $i$,
because each choice of vertex is equivalent.
We can obtain $M_n^{(h)}$ in terms of a Fibonacci-like sequence, as follows.
\begin{proposition}\label{prop:M_n}
For $n>h\geq 0$, the following holds
\[
M_n^{(h)} = n F_{n-h}^{(h)}\,.
\]
\end{proposition}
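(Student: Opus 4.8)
The plan is to start from the closed formula for $M_n^{(h)}$ in Equation (\ref{eq:M_nh formula}) and reduce the summation to the number of independent sets of a smaller power of a path, which by Equation (\ref{eq:F and p}) is precisely an $h$-Fibonacci number. First I would factor out the $n$ that already appears outside the sum in (\ref{eq:M_nh formula}), leaving
\[
M_n^{(h)} = n \sum_{k} \binom{n-hk-1}{k-1}\,.
\]
The key observation is that the summand $\binom{n-hk-1}{k-1}$ was identified in the proof of Lemma \ref{lem:q_nk formula} as $p_{n-2h-1,k-1}^{(h)}$, i.e. the number of independent $(k-1)$-subsets of $\pa{n-2h-1}{h}$. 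Reindexing the sum by $j=k-1$ therefore turns it into $\sum_{j\geq 0} p_{n-2h-1,j}^{(h)}$, which is by definition the total number of independent sets $p_{n-2h-1}^{(h)}$ of $\pa{n-2h-1}{h}$.

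Once the sum is recognized as $p_{n-2h-1}^{(h)}$, the remaining step is purely a matter of matching indices against the $h$-Fibonacci sequence. By the identity (\ref{eq:F and p}), $F_i^{(h)}=\bar p_{i-h-1}^{(h)}$ for every $i\geq 1$; setting $i=n-h$ gives $F_{n-h}^{(h)}=\bar p_{n-2h-1}^{(h)}$. Under the hypothesis $n>h$ I would check that the index $n-2h-1$ falls in the regime where the barred and unbarred quantities agree (or, when $n-2h-1<0$, that the convention $\bar p_{n-2h-1}^{(h)}=p_0^{(h)}=1$ is exactly what the collapsed sum produces, since only the $k=1$ term survives and contributes $\binom{n-h-1}{0}=1$). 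Combining the two displays then yields $M_n^{(h)}=n\,p_{n-2h-1}^{(h)}=n\,F_{n-h}^{(h)}$.

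I expect the only delicate point to be the careful bookkeeping of index ranges and the small-$n$ boundary cases, rather than any conceptual difficulty. Concretely, the main obstacle is justifying the reindexing uniformly across the whole stated range $n>h\geq 0$: for small $n$ the upper limit $\lceil n/h+1\rceil$ and the support of the binomial coefficients must be shown to match the support of $p_{n-2h-1,j}^{(h)}$, and the $\bar p$ convention in (\ref{eq:F and p}) must be invoked precisely when $n-2h-1<0$. All of this is routine once the central identification of $\binom{n-hk-1}{k-1}$ with $p_{n-2h-1,k-1}^{(h)}$ is made explicit, so the proof is essentially a reindexing argument anchored on Lemma \ref{lem:q_nk formula} and Equation (\ref{eq:F and p}).
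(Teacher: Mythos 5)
Your proposal is correct and follows essentially the same route as the paper: both start from Equation (\ref{eq:M_nh formula}), identify the summand $\binom{n-hk-1}{k-1}$ with $\bar{p}_{n-2h-1,k-1}^{(h)}$, collapse the sum to $\bar{p}_{n-2h-1}^{(h)}$, and invoke Equation (\ref{eq:F and p}) with $i=n-h$. Your explicit treatment of the boundary case $n-2h-1<0$ is slightly more careful than the paper's one-line display, but it is the same argument.
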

\begin{proof}
Using Equation (\ref{eq:F and p}) we obtain:
\[
M_n^{(h)} = n\sum_{k=1}^{\lceil n/(h+1)\rceil} \bar{p}_{n-2h-1,k-1}^{(h)} =
n \bar{p}_{n-2h-1}^{(h)}= n F_{n-h}^{(h)}\,.
\]
\end{proof}

In analogy with Section \ref{sec:hasse of paths}, we introduce
a family of Lucas-like sequences.

\begin{definition}\label{def:gen lucas}
For $h\geq 0$, and $n\geq 1$, we define the \emph{$h$-Lucas sequence}
$\mathcal{L}^{(h)}=\{L_n^{(h)}\}_{n\geq 1}$ whose elements are
\[
L_n^{(h)} = \begin{cases}
h+1 & \text{if }\ n = 1\,, \\
1 & \text{if }\ 2 \leq n \leq h+1\,, \\
L_{n-1}^{(h)}+L_{n-h-1}^{(h)} & \text{if }\ n>h+1.
\end{cases}
\]
\end{definition}

The first values of the $h$-Lucas sequences, for $h\in\{0,\dots,10\}$,
are displayed in Table \ref{tab:L_n}.
We have that,
\begin{equation}\label{eq:L and q}
L_i^{(h)}={c}_{i-1}^{(h)}\,, \ \ \text{for each}\ \  i> h+1\,.
\end{equation}

To prove the main result of this section, the following lemma is needed.

\begin{lemma}\label{lem:lucas to fibo}
For $n>h\geq 0$, the following holds
\[
L_{n+1}^{(h)}=F_{n}^{(h)}+(h+1)F_{n-h}^{(h)}
\]
\end{lemma}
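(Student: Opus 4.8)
The plan is to argue by induction on $n$, exploiting the fact that both $\mathcal{F}^{(h)}$ and $\mathcal{L}^{(h)}$ obey the \emph{same} recurrence $x_m = x_{m-1} + x_{m-h-1}$ once the index exceeds the threshold $h+1$. Consequently any fixed linear combination of shifted copies of $\mathcal{F}^{(h)}$ again satisfies that recurrence, so the whole content of the lemma reduces to matching the two sides on a suitable block of small indices and then propagating the equality upward by the common recursion.

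First I would record explicit values on the transitional range $h+1\le m\le 2h+2$. From $F_m^{(h)}=1$ for $m\le h+1$ together with the recurrence—each term $F_{m-h-1}^{(h)}$ equals $1$ as long as $m\le 2h+2$—a one-line computation gives $F_m^{(h)}=m-h$ for $h+1\le m\le 2h+2$. Likewise, using $L_1^{(h)}=h+1$ and $L_m^{(h)}=1$ for $2\le m\le h+1$, one gets $L_{h+2}^{(h)}=h+2$ and then $L_m^{(h)}=m$ for $h+2\le m\le 2h+2$. With these in hand the base range $h+1\le n\le 2h+1$ is immediate: here $n+1\in[h+2,2h+2]$, so $L_{n+1}^{(h)}=n+1$, while on the other side $F_n^{(h)}=n-h$ (since $n\le 2h+1$) and $F_{n-h}^{(h)}=1$ (since $n-h\le h+1$), whence $F_n^{(h)}+(h+1)F_{n-h}^{(h)}=(n-h)+(h+1)=n+1$. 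Both sides agree.

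For the inductive step I take $n>2h+1$ and assume the identity for all smaller indices exceeding $h$. Since $n>h$ the Lucas recurrence $L_{n+1}^{(h)}=L_n^{(h)}+L_{n-h}^{(h)}$ applies. Writing $L_n^{(h)}=L_{(n-1)+1}^{(h)}$ and $L_{n-h}^{(h)}=L_{(n-h-1)+1}^{(h)}$, the inductive hypothesis applies to both—legitimate precisely because $n-1>h$ and $n-h-1>h$, the latter being exactly $n>2h+1$—giving
\[
L_{n+1}^{(h)} = \bigl(F_{n-1}^{(h)}+F_{n-h-1}^{(h)}\bigr) + (h+1)\bigl(F_{n-h-1}^{(h)}+F_{n-2h-1}^{(h)}\bigr).
\]
The two Fibonacci recurrences $F_n^{(h)}=F_{n-1}^{(h)}+F_{n-h-1}^{(h)}$ (valid as $n>h+1$) and $F_{n-h}^{(h)}=F_{n-h-1}^{(h)}+F_{n-2h-1}^{(h)}$ (valid as $n-h>h+1$) collapse the right-hand side to $F_n^{(h)}+(h+1)F_{n-h}^{(h)}$, completing the induction.

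The only delicate point—the expected obstacle—is the bookkeeping of index ranges: the recursions for $\mathcal{F}^{(h)}$ and $\mathcal{L}^{(h)}$ only take effect above $h+1$, so one must verify that the transitional block $h+1\le n\le 2h+1$ is settled by the explicit values rather than by the recurrence, and confirm that the inductive step is invoked only for $n>2h+1$, where both applications of the hypothesis and both Fibonacci recurrences are simultaneously valid. Once those thresholds are tracked carefully, every step above is justified. (Alternatively one could route the argument through the identities $L_{n+1}^{(h)}=c_n^{(h)}$ and $F_i^{(h)}=\bar p_{i-h-1}^{(h)}$ of Equations~(\ref{eq:L and q}) and~(\ref{eq:F and p}) and argue combinatorially, but the direct induction is self-contained and cleaner.)
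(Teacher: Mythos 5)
Your proof is correct and follows essentially the same route as the paper's: induction on $n$ using the common recurrence $x_m=x_{m-1}+x_{m-h-1}$ satisfied by both sequences. The paper's version is terser (it states the inductive step and omits the base case entirely), whereas you make the transitional block $h+1\le n\le 2h+1$ and the threshold $n>2h+1$ for the inductive step explicit, which is a welcome tightening rather than a different argument.
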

\begin{proof}
The result is proved by induction on $n$. Indeed,
$
L_{n+1}^{(h)}=L_{n}^{(h)}+L_{n-h}^{(h)}\,.
$
Applying the inductive hypothesis, we have
\[
L_{n+1}^{(h)}=F_{n-1}^{(h)}+(h+1)F_{n-h-1}^{(h)}+F_{n-h-1}^{(h)}+(h+1)F_{n-2h-1}^{(h)}=F_{n}^{(h)}+(h+1)F_{n-h}^{(h)}\,.
\]

\end{proof}

Finally, our analogous of Theorem \ref{th:main}, for cycles, is the following.

\begin{theorem}
\label{th:main_cycles}
For $n>h\geq 0$, the following holds
\[
M_n^{(h)} = \left(\mathcal{F}^{(h)}\ast \mathcal{L}^{(h)}\right)(n-h)\,.
\]
\end{theorem}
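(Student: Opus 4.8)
The plan is to deduce the statement from Proposition \ref{prop:M_n} together with Lemma \ref{lem:lucas to fibo}. By Proposition \ref{prop:M_n} we have $M_n^{(h)} = n F_{n-h}^{(h)}$, so, writing $m = n-h \geq 1$ and unfolding the convolution (defined as in (\ref{eq:convolution}), now between $\mathcal{F}^{(h)}$ and $\mathcal{L}^{(h)}$), the theorem is equivalent to the purely sequential identity
\[
\sum_{i=1}^{m} F_i^{(h)} L_{m-i+1}^{(h)} = (m+h)\, F_m^{(h)} \qquad (m \geq 1)\,.
\]
I would prove this by induction on $m$.

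For the base cases $1 \leq m \leq h+1$, every Fibonacci factor $F_i^{(h)}$ with $1 \leq i \leq m$ equals $1$, and among the Lucas factors exactly one has index $1$ (contributing $L_1^{(h)} = h+1$), while the remaining $m-1$ have index between $2$ and $h+1$ (each contributing $1$). Hence the left-hand side equals $(m-1) + (h+1) = m+h$, which matches the right-hand side since $F_m^{(h)} = 1$ in this range.

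For the inductive step ($m > h+1$) I would show that both sides obey the same inhomogeneous recurrence $X(m) = X(m-1) + X(m-h-1) + L_m^{(h)}$. For the right-hand side $R(m) = (m+h)F_m^{(h)}$ this is a direct computation: expanding $F_m^{(h)} = F_{m-1}^{(h)} + F_{m-h-1}^{(h)}$ gives $R(m) - R(m-1) - R(m-h-1) = F_{m-1}^{(h)} + (h+1)F_{m-h-1}^{(h)}$, which equals $L_m^{(h)}$ by Lemma \ref{lem:lucas to fibo}. For the convolution $G(m) = \sum_{i=1}^m F_i^{(h)} L_{m-i+1}^{(h)}$ I would apply the defining recurrence $L_{m-i+1}^{(h)} = L_{m-i}^{(h)} + L_{m-i-h}^{(h)}$ to the interior terms (those with index $m-i+1 > h+1$, i.e. $i \leq m-h-1$); the two resulting sums reassemble into $G(m-1)$ and $G(m-h-1)$, up to boundary terms whose Lucas index is at most $h+1$. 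Collecting those boundary terms and inserting the explicit values $L_1^{(h)} = h+1$ and $L_j^{(h)} = 1$ for $2 \leq j \leq h+1$, the telescoping of the Fibonacci factors leaves exactly $F_{m-1}^{(h)} + (h+1)F_{m-h-1}^{(h)}$, which is $L_m^{(h)}$ by Lemma \ref{lem:lucas to fibo} once more. Since the two sequences satisfy the same recurrence for $m > h+1$ and agree on the initial segment $1 \leq m \leq h+1$, they coincide, which proves the identity and hence the theorem.

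The step I expect to be the main obstacle is the bookkeeping of the boundary terms in the inductive step: applying the Lucas recurrence only shifts the interior of the convolution, and the small-index Lucas values --- in particular the exceptional term $L_1^{(h)} = h+1$ --- must be gathered and shown to collapse, via Lemma \ref{lem:lucas to fibo}, to precisely the inhomogeneous term $L_m^{(h)}$. Everything else is routine reindexing of the two sums.
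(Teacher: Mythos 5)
Your proof is correct, and although it shares the paper's overall skeleton --- reduce the theorem via Proposition \ref{prop:M_n} to the identity $\sum_{i=1}^{m}F_i^{(h)}L_{m-i+1}^{(h)}=(m+h)F_m^{(h)}$ with $m=n-h$, then induct on $m$, invoking Lemma \ref{lem:lucas to fibo} --- your execution of the induction is genuinely different and considerably leaner. The paper treats $h=0$ and $h=1$ as separate special cases, introduces auxiliary sequences $\bar F^{(h)},\bar L^{(h)}$ extended to negative indices, applies \emph{both} recurrences at once so that the convolution splits into the four sums \eqref{eq:D1}--\eqref{eq:D4}, and then runs a four-way case analysis on $m$ relative to $h+1$ and $2h+1$. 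You instead apply the recurrence only to the Lucas factor of the interior terms (those with Lucas index exceeding $h+1$), so the sum reassembles as $G(m)=G(m-1)+G(m-h-1)+B$ with boundary term $B=\sum_{j=1}^{h+1}F_{m-j+1}^{(h)}L_j^{(h)}-\sum_{j=1}^{h}F_{m-j}^{(h)}L_j^{(h)}$; inserting $L_1^{(h)}=h+1$ and $L_j^{(h)}=1$ for $2\le j\le h+1$ and telescoping gives $B=(h+1)F_m^{(h)}-hF_{m-1}^{(h)}=F_{m-1}^{(h)}+(h+1)F_{m-h-1}^{(h)}=L_m^{(h)}$ by Lemma \ref{lem:lucas to fibo}, which is exactly the inhomogeneous term that $(m+h)F_m^{(h)}$ produces under the same recurrence. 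I verified the bookkeeping you flagged as the main obstacle: it goes through uniformly for all $h\ge 0$ and all $m>h+1$ (for $h=0$ the boundary term degenerates to the single summand $F_m^{(0)}L_1^{(0)}=L_m^{(0)}$), and your base cases $1\le m\le h+1$ are right. What your route buys is the elimination of the negative-index extensions, of the special treatment of $h=0,1$, and of the entire case split on $m$; the paper's route is more mechanical but substantially longer for the same conclusion.
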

\begin{proof}
By Proposition \ref{prop:M_n}, the statement of the Theorem is equivalent to
\begin{equation}\label{eq:main2}
\sum_{i=1}^{n-h}F_{i}^{(h)}L_{n-h+1-i}^{(h)}=n F_{n-h}^{(h)}\,.
\end{equation}

Let $h=0$. We have
\begin{equation*}
\sum_{i=1}^{n}F_{i}^{(0)}L_{n+1-i}^{(0)}=
\sum_{i=1}^{n}2^{i-1}2^{n-i}=\sum_{i=1}^{n}2^{n-1}=
n 2^{n-1}=n F_{n}^{(0)}\,.
\end{equation*}

Let $h=1$. In this case the statement of the theorem reduces to the well known identity involving (classical) Fibonacci and Lucas sequences:
\begin{equation*}
\sum_{i=1}^{n}F_{i}L_{n-i+1}=(n+1) F_{n}\,.
\end{equation*}

Let $h\geq 2$. We prove \eqref{eq:main2} by induction on $n$. If $n=h+1$, then
\begin{equation*}
\sum_{i=1}^{n-h}F_{i}^{(h)}L_{n-h+1-i}^{(h)}=F_{1}^{(h)}L_{1}^{(h)}=h+1=n F_{1}^{(h)}\,.
\end{equation*}
Let $\bar{n}>h+1$, and suppose (inductive hypothesis) that \eqref{eq:main2} holds
for every $1<n\leq\bar{n}$. Let $m=\bar{n}-h$ (and note that $m\geq 2$). We need to prove that
\begin{equation}\label{eq:m2induction}
\sum_{i=1}^{m+1}F_{i}^{(h)}L_{m+2-i}^{(h)}=(m+h+1) F_{m+1}^{(h)}\,.
\end{equation}
We find it convenient to define, for $h\geq 2$, the following integer sequences, which extend $F_n^{(h)}$ and $L_n^{(h)}$
to a range of negative integers.
\begin{align}
\label{eq:fiboneg}&\bar{F}_n^{(h)} = \begin{cases}
1 & \text{if }\ n = -h\,, \\
0 & \text{if }\ -h < n \leq 0\,, \\
\bar{F}_{n-1}^{(h)}+\bar{F}_{n-h-1}^{(h)} & \text{if }\ n>0.
\end{cases}
\\
\label{eq:lucasneg}&\bar{L}_n^{(h)} = \begin{cases}
h+1 & \text{if }\ n = -h\,, \\
-h & \text{if }\ n = -h+1\,, \\
0 & \text{if }\ -h+1 < n \leq 0\,, \\
\bar{L}_{n-1}^{(h)}+\bar{L}_{n-h-1}^{(h)} & \text{if }\ n>0.
\end{cases}
\end{align}
Using Definitions \ref{def:gen fibonacci} and \ref{def:gen lucas} one can easily check that,
for $n>0$, ${F}_n^{(h)}=\bar{F}_n^{(h)}$, and
${L}_n^{(h)}=\bar{L}_n^{(h)}$. Hence, applying the recurrences in \eqref{eq:fiboneg} and
\eqref{eq:lucasneg}, we obtain
\begin{align}
\notag\sum_{i=1}^{m+1}F_{i}^{(h)}L_{m+2-i}^{(h)}&=\sum_{i=1}^{m+1}\bar{F}_{i}^{(h)}\bar{L}_{m+2-i}^{(h)}=\\
\label{eq:D1}
&=\sum_{i=1}^{m+1}\bar{F}_{i-1}^{(h)}\bar{L}_{m+1-i}^{(h)}\,+\\
\label{eq:D2}
&+\sum_{i=1}^{m+1}\bar{F}_{i-1}^{(h)}\bar{L}_{m+1-h-i}^{(h)}\,+\\
\label{eq:D3}
&+\sum_{i=1}^{m+1}\bar{F}_{i-h-1}^{(h)}\bar{L}_{m+1-i}^{(h)}\,+\\
\label{eq:D4}
&+\sum_{i=1}^{m+1}\bar{F}_{i-h-1}^{(h)}\bar{L}_{m+1-h-i}^{(h)}\,.
\end{align}

We compute the sums \eqref{eq:D1}--\eqref{eq:D4} separately.

\medskip
Direct computation shows that \eqref{eq:D1} equals
\begin{equation}\label{eq:D1b}
\sum_{i=1}^{m-1}F_{i}^{(h)}L_{m-i}^{(h)}+
\bar{F}_{0}^{(h)}\bar{L}_{m}^{(h)}+
\bar{F}_{m}^{(h)}\bar{L}_{0}^{(h)}\,.
\end{equation}
Applying the inductive hypotheses to the first term of \eqref{eq:D1b}, and observing that
$\bar{L}_{0}^{(h)}=\bar{F}_{0}^{(h)}=0$, we obtain that the sum \eqref{eq:D1} is
\begin{equation}\label{eq:D1c}
(m+h-1)F_{m-1}^{(h)}\,.
\end{equation}

\medskip
In order to compute \eqref{eq:D2} we distinguish two cases. If $m> h+1$, direct computation shows that \eqref{eq:D2} equals
\begin{equation}\label{eq:D2b}
\sum_{i=1}^{m-h-1}F_{i}^{(h)}L_{m-h-i}^{(h)}+
\bar{F}_{0}^{(h)}\bar{L}_{m-h}^{(h)}+
\bar{F}_{m-h}^{(h)}\bar{L}_{0}^{(h)}+
\bar{F}_{m-h+1}^{(h)}\bar{L}_{-1}^{(h)}+\cdots+
\bar{F}_{m}^{(h)}\bar{L}_{-h}^{(h)}.
\end{equation}
We note that $m > h+1 \geq 3$. Applying the inductive hypotheses to the first term of \eqref{eq:D2b}, and observing that
all other terms are zeroes, except $\bar{F}_{m}^{(h)}\bar{L}_{-h}^{(h)}=(h+1){F}_{m}^{(h)}$,
and $\bar{F}_{m-1}^{(h)}\bar{L}_{-h+1}^{(h)}=-h {F}_{m-1}^{(h)}$, we obtain that \eqref{eq:D2}
equals
\begin{flalign}\label{eq:D2c1}
&(m-1)F_{m-h-1}^{(h)}+(h+1) F_{m}^{(h)} - h F_{m-1}^{(h)}\,.\ \ \ &(\text{for}\ \ m> h+1)
\end{flalign}

If, on the other hand, $m\leq h+1$, we observe that the summands of \eqref{eq:D2}
vanish, with the exception of  $\bar{F}_{m}^{(h)}\bar{L}_{-h}^{(h)}=(h+1)F_m^{(h)}$,
and $\bar{F}_{m-1}^{(h)}\bar{L}_{-h+1}^{(h)}=-h F_{m-1}^{(h)}$.
Indeed, $m\geq 2$. Thus, the sum \eqref{eq:D2} is
\begin{flalign}\label{eq:D2c2}
&(h+1) F_{m}^{(h)} - h F_{m-1}^{(h)}\,.\ \ \ &(\text{for}\ \ m\leq h+1)
\end{flalign}

\medskip
A similar argument shows that \eqref{eq:D3} equals
\begin{flalign}
\label{eq:D3c1}&(m-1)F_{m-h-1}^{(h)}+L_{m}^{(h)}\,,\ \ \ &(\text{for}\ \ m> h+1)\\
\label{eq:D3c2}&L_{m}^{(h)}\,.\ \ \ &(\text{for}\ \ m\leq h+1)
\end{flalign}

\medskip
To calculate \eqref{eq:D4} we distinguish the four cases $m\leq h$, $m=h+1$, $h+1<m\leq 2h+1$,
and $m> 2h+1$. If $m> 2h+1$, \eqref{eq:D4} equals
\begin{equation*}
\sum_{i=1}^{m-2h-1}F_{i}^{(h)}L_{m-2h-i}^{(h)}+
\bar{L}_{-h}^{(h)}\bar{F}_{m-h}^{(h)}+
\cdots+\bar{L}_{0}^{(h)}\bar{F}_{m-2h}^{(h)}+
\bar{F}_{-h}^{(h)}\bar{L}_{m-h}^{(h)}+\cdots+
\bar{F}_{0}^{(h)}\bar{L}_{m-2h}^{(h)}.
\end{equation*}
We note that $m-h-1>h>0$. Applying the inductive hypotheses to the first term of the preceding sum,
and observing that all other terms are zeroes, except
$\bar{L}_{-h}^{(h)}\bar{F}_{m-h}^{(h)}=(h+1)F_{m-h}^{(h)}$,
$\bar{L}_{-h+1}^{(h)}\bar{F}_{m-h-1}^{(h)}=-h F_{m-h-1}^{(h)}$,
and $\bar{F}_{-h}^{(h)}\bar{L}_{m-h}^{(h)}=L_{m-h}^{(h)}$,
we obtain that \eqref{eq:D4} is
\begin{flalign}\label{eq:D4c1}
&(m-h-1)F_{m-2h-1}^{(h)}+(h+1)F_{m-h}^{(h)}-hF_{m-h-1}^{(h)}+L_{m-h}^{(h)}.&(\text{for}\ \ m> 2h+1)
\end{flalign}
To tackle the other cases, we expand the sum \eqref{eq:D4}, and we check how
the non-zeroes terms change depending on the constraints on m. We obtain that
\eqref{eq:D4} equals
\begin{flalign}
\label{eq:D4c2}&(h+1)F_{m-h}^{(h)}-hF_{m-h-1}^{(h)}+L_{m-h}^{(h)}\,, &(\text{for}\ \ h+1<m\leq 2h+1)\\
\label{eq:D4c3}&(h+1)F_{m-h}^{(h)}+L_{m-h}^{(h)}\,, &(\text{for}\ \ m=h+1)\\
\label{eq:D4c4}&0\,. &(\text{for}\ \ m\leq h)
\end{flalign}

\medskip
Finally, we add up the four summands \eqref{eq:D1}--\eqref{eq:D4}, distinguishing the four identified cases.  If $m> 2h+1$, we sum \eqref{eq:D1c}, \eqref{eq:D2c1}, \eqref{eq:D3c1}, and \eqref{eq:D4c1},
obtaining
\begin{multline*}
\sum_{i=1}^{m+1} F_{i}^{(h)}L_{m+1-i}^{(h)}=(m+h-1)F_{m-1}^{(h)}
+2(m-1)F_{m-h-1}^{(h)}+(h+1) F_{m}^{(h)} - h F_{m-1}^{(h)}+\\
+L_{m}^{(h)}+(m-h-1)F_{m-2h-1}^{(h)}+(h+1)F_{m-h}^{(h)}-hF_{m-h-1}^{(h)}+L_{m-h}^{(h)}\,.
\end{multline*}
Repeatedly applying the recurrences $F_i^{(h)}=F_{i-1}^{(h)}+F_{i-h-1}^{(h)}$ and
$L_i^{(h)}=L_{i-1}^{(h)}+L_{i-h-1}^{(h)}$, and using Lemma \ref{lem:lucas to fibo}, we obtain
\begin{multline*}
\sum_{i=1}^{m+1}F_{i}^{(h)}L_{m+2-i}^{(h)}=
(m+h+1)F_{m+1}^{(h)}\,,\\
\end{multline*}
which proves \eqref{eq:main2} in the case $m> 2h+1$.

\medskip
If $h+1<m\leq 2h+1$, we sum \eqref{eq:D1c}, \eqref{eq:D2c1}, \eqref{eq:D3c1}, and \eqref{eq:D4c2},
and we obtain
\begin{multline*}
\sum_{i=1}^{m+1} F_{i}^{(h)}L_{m+2-i}^{(h)}=(m+h-1)F_{m-1}^{(h)}
+2(m-1)F_{m-h-1}^{(h)}+(h+1) F_{m}^{(h)} - h F_{m-1}^{(h)}+\\
+L_{m}^{(h)}+(h+1)F_{m-h}^{(h)}-hF_{m-h-1}^{(h)}+L_{m-h}^{(h)}\,.
\end{multline*}
Note that, by Definition \ref{def:gen fibonacci}, we have $F_{h+k}^{(h)}=k$, for $k\in\{1,h+2\}$. Moreover,
$F_{m-h-1}^{(h)}=F_{m-h}^{(h)}=1$, and, by Definition \ref{def:gen lucas}, $L_{m-h}^{(h)}=1$.
Applying Lemma \ref{lem:lucas to fibo} to $L_{m}^{(h)}$,  we obtain
\begin{align*}
\sum_{i=1}^{m+1} F_{i}^{(h)}L_{m+2-i}^{(h)}&
=(h+1)F_m^{(h)}+mF_{m-1}^{(h)}+h+1+2m=\\&
=(h+1)(m-h)+m(m-1-h)+h+1+2m=\\&
=(m+h+1)(m+1-h)=(m+h+1)F_{m+1}^{(h)}\,.
\end{align*}

\medskip
If $m=h+1$, we sum \eqref{eq:D1c}, \eqref{eq:D2c2}, \eqref{eq:D3c2}, and \eqref{eq:D4c3},
obtaining
\begin{multline*}
\sum_{i=1}^{m+1} F_{i}^{(h)}L_{m+2-i}^{(h)}=(m+h-1)F_{m-1}^{(h)}
+(h+1) F_{m}^{(h)} - h F_{m-1}^{(h)}+\\
+L_{m}^{(h)}+(h+1)F_{m-h}^{(h)}+L_{m-h}^{(h)}\,.
\end{multline*}
Noting that, by Definition \ref{def:gen fibonacci} and Definition \ref{def:gen lucas},
$F_{m}^{(h)}=F_{m-1}^{(h)}=\dots=F_{1}^{(h)}=1$, $L_{m}^{(h)}=1$, and $L_{m-h}^{(h)}=h+1$, we have
\begin{align*}
\sum_{i=1}^{m+1} F_{i}^{(h)}L_{m+2-i}^{(h)}=
4h+4=(m+h+1)2=(m+h+1)F_{m+1}^{(h)}\,.
\end{align*}

\medskip
In the last case, if $m<h+1$, we sum \eqref{eq:D1c}, \eqref{eq:D2c2}, \eqref{eq:D3c2}, and \eqref{eq:D4c4},
and we have
\begin{multline*}
\sum_{i=1}^{m+1} F_{i}^{(h)}L_{m+2-i}^{(h)}=(m+h-1)F_{m-1}^{(h)}
+(h+1) F_{m}^{(h)} - h F_{m-1}^{(h)}+L_{m}^{(h)}\,.
\end{multline*}
Proceeding as in the previous case, and recalling that $m\geq 2$, we obtain
\begin{align*}
\sum_{i=1}^{m+1} F_{i}^{(h)}L_{m+2-i}^{(h)}=
m+h-1+h+1-h+1=
(m+h+1)F_{m+1}^{(h)}\,.
\end{align*}

\medskip
In all cases we obtain the desired result, and the proof is complete.
\end{proof}

We display some values of $M_n^{(h)}$ in Table \ref{tab:M_n}.

\section{Tables}
\label{sec:tables}

We collect here some values obtained by computing the formula presented in the preceding sections.

\begin{center}
\noindent\(\footnotesize{\arraycolsep=2.2pt
\begin{array}{c|cccccccccccccccc}
  & \text{n=0} & 1 & 2 & 3 & 4 & 5 & 6 & 7 & 8 & 9 & 10 & 11 & 12 & 13 & 14 & 15 \\
\hline
 \text{k=0} & 1 & 1 & 1 & 1 & 1 & 1 & 1 & 1 & 1 & 1 & 1 & 1 & 1 & 1 & 1 & 1 \\
 1 & 0 & 1 & 2 & 3 & 4 & 5 & 6 & 7 & 8 & 9 & 10 & 11 & 12 & 13 & 14 & 15 \\
 2 & 0 & 0 & 0 & 1 & 3 & 6 & 10 & 15 & 21 & 28 & 36 & 45 & 55 & 66 & 78 & 91 \\
 3 & 0 & 0 & 0 & 0 & 0 & 1 & 4 & 10 & 20 & 35 & 56 & 84 & 120 & 165 & 220 & 286 \\
 4 & 0 & 0 & 0 & 0 & 0 & 0 & 0 & 1 & 5 & 15 & 35 & 70 & 126 & 210 & 330 & 495 \\
 5 & 0 & 0 & 0 & 0 & 0 & 0 & 0 & 0 & 0 & 1 & 6 & 21 & 56 & 126 & 252 & 462 \\
 6 & 0 & 0 & 0 & 0 & 0 & 0 & 0 & 0 & 0 & 0 & 0 & 1 & 7 & 28 & 84 & 210 \\
 7 & 0 & 0 & 0 & 0 & 0 & 0 & 0 & 0 & 0 & 0 & 0 & 0 & 0 & 1 & 8 & 36 \\
 8 & 0 & 0 & 0 & 0 & 0 & 0 & 0 & 0 & 0 & 0 & 0 & 0 & 0 & 0 & 0 & 1 \\
\end{array}
}\)
\captionof{table}{The number $p_{n,k}^{(1)}$ of  independent $k$-subsets of $\pa{n}{1}$}
\label{tab:p_nk_1}
\end{center}

\begin{center}
\noindent\(\footnotesize{\arraycolsep=2.2pt
\begin{array}{c|ccccccccccccccccc}
  & \text{n=0} & 1 & 2 & 3 & 4 & 5 & 6 & 7 & 8 & 9 & 10 & 11 & 12 & 13 & 14 & 15 & 16 \\
\hline
 \text{k=0} & 1 & 1 & 1 & 1 & 1 & 1 & 1 & 1 & 1 & 1 & 1 & 1 & 1 & 1 & 1 & 1 & 1 \\
 1 & 0 & 1 & 2 & 3 & 4 & 5 & 6 & 7 & 8 & 9 & 10 & 11 & 12 & 13 & 14 & 15 & 16 \\
 2 & 0 & 0 & 0 & 0 & 1 & 3 & 6 & 10 & 15 & 21 & 28 & 36 & 45 & 55 & 66 & 78 & 91 \\
 3 & 0 & 0 & 0 & 0 & 0 & 0 & 0 & 1 & 4 & 10 & 20 & 35 & 56 & 84 & 120 & 165 & 220 \\
 4 & 0 & 0 & 0 & 0 & 0 & 0 & 0 & 0 & 0 & 0 & 1 & 5 & 15 & 35 & 70 & 126 & 210 \\
 5 & 0 & 0 & 0 & 0 & 0 & 0 & 0 & 0 & 0 & 0 & 0 & 0 & 0 & 1 & 6 & 21 & 56 \\
 6 & 0 & 0 & 0 & 0 & 0 & 0 & 0 & 0 & 0 & 0 & 0 & 0 & 0 & 0 & 0 & 0 & 1 \\
\end{array}
}\)
\captionof{table}{The number $p_{n,k}^{(2)}$ of  independent $k$-subsets of $\pa{n}{2}$}
\label{tab:p_nk_2}
\end{center}

\begin{center}
\noindent\(\footnotesize{\arraycolsep=2.2pt
\begin{array}{c|cccccccccccccccccc}
  & \text{n=0} & 1 & 2 & 3 & 4 & 5 & 6 & 7 & 8 & 9 & 10 & 11 & 12 & 13 & 14 & 15 & 16 & 17 \\
\hline
 \text{k=0} & 1 & 1 & 1 & 1 & 1 & 1 & 1 & 1 & 1 & 1 & 1 & 1 & 1 & 1 & 1 & 1 & 1 & 1 \\
 1 & 0 & 1 & 2 & 3 & 4 & 5 & 6 & 7 & 8 & 9 & 10 & 11 & 12 & 13 & 14 & 15 & 16 & 17 \\
 2 & 0 & 0 & 0 & 0 & 0 & 1 & 3 & 6 & 10 & 15 & 21 & 28 & 36 & 45 & 55 & 66 & 78 & 91 \\
 3 & 0 & 0 & 0 & 0 & 0 & 0 & 0 & 0 & 0 & 1 & 4 & 10 & 20 & 35 & 56 & 84 & 120 & 165 \\
 4 & 0 & 0 & 0 & 0 & 0 & 0 & 0 & 0 & 0 & 0 & 0 & 0 & 0 & 1 & 5 & 15 & 35 & 70 \\
 5 & 0 & 0 & 0 & 0 & 0 & 0 & 0 & 0 & 0 & 0 & 0 & 0 & 0 & 0 & 0 & 0 & 0 & 1 \\
\end{array}
}\)
\captionof{table}{The number $p_{n,k}^{(3)}$ of  independent $k$-subsets of $\pa{n}{3}$}
\label{tab:p_nk_3}
\end{center}
%

\begin{center}
\noindent\(\footnotesize{\arraycolsep=2.2pt
\begin{array}{c|cccccccccccccc}
  & \text{n=0} & 1 & 2 & 3 & 4 & 5 & 6 & 7 & 8 & 9 & 10 & 11 & 12 & 13 \\
\hline
 \text{h=0} & 1 & 2 & 4 & 8 & 16 & 32 & 64 & 128 & 256 & 512 & 1024 & 2048 & 4096 & 8192 \\
 1 & 1 & 2 & 3 & 5 & 8 & 13 & 21 & 34 & 55 & 89 & 144 & 233 & 377 & 610 \\
 2 & 1 & 2 & 3 & 4 & 6 & 9 & 13 & 19 & 28 & 41 & 60 & 88 & 129 & 189 \\
 3 & 1 & 2 & 3 & 4 & 5 & 7 & 10 & 14 & 19 & 26 & 36 & 50 & 69 & 95 \\
 4 & 1 & 2 & 3 & 4 & 5 & 6 & 8 & 11 & 15 & 20 & 26 & 34 & 45 & 60 \\
 5 & 1 & 2 & 3 & 4 & 5 & 6 & 7 & 9 & 12 & 16 & 21 & 27 & 34 & 43 \\
 6 & 1 & 2 & 3 & 4 & 5 & 6 & 7 & 8 & 10 & 13 & 17 & 22 & 28 & 35 \\
 7 & 1 & 2 & 3 & 4 & 5 & 6 & 7 & 8 & 9 & 11 & 14 & 18 & 23 & 29 \\
 8 & 1 & 2 & 3 & 4 & 5 & 6 & 7 & 8 & 9 & 10 & 12 & 15 & 19 & 24 \\
 9 & 1 & 2 & 3 & 4 & 5 & 6 & 7 & 8 & 9 & 10 & 11 & 13 & 16 & 20 \\
 10 & 1 & 2 & 3 & 4 & 5 & 6 & 7 & 8 & 9 & 10 & 11 & 12 & 14 & 17 \\
\end{array}
}\)
\captionof{table}{The number $p_{n}^{(h)}$ of  all independent sets of $\pa{n}{h}$}
\label{tab:p_nk}
\end{center}

\begin{center}
\noindent\(\footnotesize{\arraycolsep=2.2pt
\begin{array}{c|ccccccccccccccc}
  & \text{n=1} & 2 & 3 & 4 & 5 & 6 & 7 & 8 & 9 & 10 & 11 & 12 & 13 & 14 & 15 \\
\hline
 \text{h=0} & 1 & 2 & 4 & 8 & 16 & 32 & 64 & 128 & 256 & 512 & 1024 & 2048 & 4096 & 8192 & 16384 \\
 1 & 1 & 1 & 2 & 3 & 5 & 8 & 13 & 21 & 34 & 55 & 89 & 144 & 233 & 377 & 610 \\
 2 & 1 & 1 & 1 & 2 & 3 & 4 & 6 & 9 & 13 & 19 & 28 & 41 & 60 & 88 & 129 \\
 3 & 1 & 1 & 1 & 1 & 2 & 3 & 4 & 5 & 7 & 10 & 14 & 19 & 26 & 36 & 50 \\
 4 & 1 & 1 & 1 & 1 & 1 & 2 & 3 & 4 & 5 & 6 & 8 & 11 & 15 & 20 & 26 \\
 5 & 1 & 1 & 1 & 1 & 1 & 1 & 2 & 3 & 4 & 5 & 6 & 7 & 9 & 12 & 16 \\
 6 & 1 & 1 & 1 & 1 & 1 & 1 & 1 & 2 & 3 & 4 & 5 & 6 & 7 & 8 & 10 \\
 7 & 1 & 1 & 1 & 1 & 1 & 1 & 1 & 1 & 2 & 3 & 4 & 5 & 6 & 7 & 8 \\
 8 & 1 & 1 & 1 & 1 & 1 & 1 & 1 & 1 & 1 & 2 & 3 & 4 & 5 & 6 & 7 \\
 9 & 1 & 1 & 1 & 1 & 1 & 1 & 1 & 1 & 1 & 1 & 2 & 3 & 4 & 5 & 6 \\
 10 & 1 & 1 & 1 & 1 & 1 & 1 & 1 & 1 & 1 & 1 & 1 & 2 & 3 & 4 & 5 \\
\end{array}
}\)
\captionof{table}{Values of the $h$-Fibonacci sequence
$\mathcal{F}^{(h)}=\{F_n^{(h)}\}_{n\geq 1}$}
\label{tab:F_n}
\end{center}

\begin{center}
\noindent\(\footnotesize{\arraycolsep=2.2pt
\begin{array}{c|cccccccccccccc}
  & \text{n=0} & 1 & 2 & 3 & 4 & 5 & 6 & 7 & 8 & 9 & 10 & 11 & 12 & 13 \\
\hline
 \text{h=0} & 0 & 1 & 4 & 12 & 32 & 80 & 192 & 448 & 1024 & 2304 & 5120 & 11264 & 24576 & 53248 \\
 1 & 0 & 1 & 2 & 5 & 10 & 20 & 38 & 71 & 130 & 235 & 420 & 744 & 1308 & 2285 \\
 2 & 0 & 1 & 2 & 3 & 6 & 11 & 18 & 30 & 50 & 81 & 130 & 208 & 330 & 520 \\
 3 & 0 & 1 & 2 & 3 & 4 & 7 & 12 & 19 & 28 & 42 & 64 & 97 & 144 & 212 \\
 4 & 0 & 1 & 2 & 3 & 4 & 5 & 8 & 13 & 20 & 29 & 40 & 56 & 80 & 115 \\
 5 & 0 & 1 & 2 & 3 & 4 & 5 & 6 & 9 & 14 & 21 & 30 & 41 & 54 & 72 \\
 6 & 0 & 1 & 2 & 3 & 4 & 5 & 6 & 7 & 10 & 15 & 22 & 31 & 42 & 55 \\
 7 & 0 & 1 & 2 & 3 & 4 & 5 & 6 & 7 & 8 & 11 & 16 & 23 & 32 & 43 \\
 8 & 0 & 1 & 2 & 3 & 4 & 5 & 6 & 7 & 8 & 9 & 12 & 17 & 24 & 33 \\
 9 & 0 & 1 & 2 & 3 & 4 & 5 & 6 & 7 & 8 & 9 & 10 & 13 & 18 & 25 \\
 10 & 0 & 1 & 2 & 3 & 4 & 5 & 6 & 7 & 8 & 9 & 10 & 11 & 14 & 19 \\
\end{array}}\)
\captionof{table}{The number $H_n^{(h)}$ of edges of $\hp{n}{h}$}
\label{tab:H_n}
\end{center}

\begin{center}
\noindent\(\footnotesize{\arraycolsep=2.2pt
\begin{array}{c|ccccccccccccccccc}
  & \text{n=0} & 1 & 2 & 3 & 4 & 5 & 6 & 7 & 8 & 9 & 10 & 11 & 12 & 13 & 14 & 15 & 16 \\
\hline
 \text{k=0} & 1 & 1 & 1 & 1 & 1 & 1 & 1 & 1 & 1 & 1 & 1 & 1 & 1 & 1 & 1 & 1 & 1 \\
 1 & 0 & 1 & 2 & 3 & 4 & 5 & 6 & 7 & 8 & 9 & 10 & 11 & 12 & 13 & 14 & 15 & 16 \\
 2 & 0 & 0 & 0 & 0 & 2 & 5 & 9 & 14 & 20 & 27 & 35 & 44 & 54 & 65 & 77 & 90 & 104 \\
 3 & 0 & 0 & 0 & 0 & 0 & 0 & 2 & 7 & 16 & 30 & 50 & 77 & 112 & 156 & 210 & 275 & 352 \\
 4 & 0 & 0 & 0 & 0 & 0 & 0 & 0 & 0 & 2 & 9 & 25 & 55 & 105 & 182 & 294 & 450 & 660 \\
 5 & 0 & 0 & 0 & 0 & 0 & 0 & 0 & 0 & 0 & 0 & 2 & 11 & 36 & 91 & 196 & 378 & 672 \\
 6 & 0 & 0 & 0 & 0 & 0 & 0 & 0 & 0 & 0 & 0 & 0 & 0 & 2 & 13 & 49 & 140 & 336 \\
 7 & 0 & 0 & 0 & 0 & 0 & 0 & 0 & 0 & 0 & 0 & 0 & 0 & 0 & 0 & 2 & 15 & 64 \\
 8 & 0 & 0 & 0 & 0 & 0 & 0 & 0 & 0 & 0 & 0 & 0 & 0 & 0 & 0 & 0 & 0 & 2 \\
\end{array}}\)
\captionof{table}{The number $c_{n,k}^{(1)}$ of  independent $k$-subsets of $\cy{n}{1}$}
\label{tab:q_nk_1}
\end{center}

\begin{center}
\noindent\(\footnotesize{\arraycolsep=2.2pt
\begin{array}{c|cccccccccccccccccc}
  & \text{n=0} & 1 & 2 & 3 & 4 & 5 & 6 & 7 & 8 & 9 & 10 & 11 & 12 & 13 & 14 & 15 & 16 & 17 \\
\hline
 \text{k=0} & 1 & 1 & 1 & 1 & 1 & 1 & 1 & 1 & 1 & 1 & 1 & 1 & 1 & 1 & 1 & 1 & 1 & 1 \\
 1 & 0 & 1 & 2 & 3 & 4 & 5 & 6 & 7 & 8 & 9 & 10 & 11 & 12 & 13 & 14 & 15 & 16 & 17 \\
 2 & 0 & 0 & 0 & 0 & 0 & 0 & 3 & 7 & 12 & 18 & 25 & 33 & 42 & 52 & 63 & 75 & 88 & 102 \\
 3 & 0 & 0 & 0 & 0 & 0 & 0 & 0 & 0 & 0 & 3 & 10 & 22 & 40 & 65 & 98 & 140 & 192 & 255 \\
 4 & 0 & 0 & 0 & 0 & 0 & 0 & 0 & 0 & 0 & 0 & 0 & 0 & 3 & 13 & 35 & 75 & 140 & 238 \\
 5 & 0 & 0 & 0 & 0 & 0 & 0 & 0 & 0 & 0 & 0 & 0 & 0 & 0 & 0 & 0 & 3 & 16 & 51 \\
\end{array}}\)
\captionof{table}{The number $c_{n,k}^{(2)}$ of  independent $k$-subsets of $\cy{n}{2}$}
\label{tab:q_nk_2}
\end{center}

\begin{center}
\noindent\(\footnotesize{\arraycolsep=2.2pt
\begin{array}{c|ccccccccccccccccccc}
  & \text{n=0} & 1 & 2 & 3 & 4 & 5 & 6 & 7 & 8 & 9 & 10 & 11 & 12 & 13 & 14 & 15 & 16 & 17 & 18 \\
\hline
 \text{k=0} & 1 & 1 & 1 & 1 & 1 & 1 & 1 & 1 & 1 & 1 & 1 & 1 & 1 & 1 & 1 & 1 & 1 & 1 & 1 \\
 1 & 0 & 1 & 2 & 3 & 4 & 5 & 6 & 7 & 8 & 9 & 10 & 11 & 12 & 13 & 14 & 15 & 16 & 17 & 18 \\
 2 & 0 & 0 & 0 & 0 & 0 & 0 & 0 & 0 & 4 & 9 & 15 & 22 & 30 & 39 & 49 & 60 & 72 & 85 & 99 \\
 3 & 0 & 0 & 0 & 0 & 0 & 0 & 0 & 0 & 0 & 0 & 0 & 0 & 4 & 13 & 28 & 50 & 80 & 119 & 168 \\
 4 & 0 & 0 & 0 & 0 & 0 & 0 & 0 & 0 & 0 & 0 & 0 & 0 & 0 & 0 & 0 & 0 & 4 & 17 & 45 \\
\end{array}}\)
\captionof{table}{The number $c_{n,k}^{(3)}$ of  independent $k$-subsets of $\cy{n}{3}$}
\label{tab:q_nk_3}
\end{center}
%

\begin{center}
\noindent\(\footnotesize{\arraycolsep=2.2pt
\begin{array}{c|ccccccccccccccccc}
  & \text{n=0} & 1 & 2 & 3 & 4 & 5 & 6 & 7 & 8 & 9 & 10 & 11 & 12 & 13 & 14 & 15 & 16 \\
\hline
 \text{h=0} & 1 & 2 & 4 & 8 & 16 & 32 & 64 & 128 & 256 & 512 & 1024 & 2048 & 4096 & 8192 & 16384 & 32768 & 65536 \\
 1 & 1 & 2 & 3 & 4 & 7 & 11 & 18 & 29 & 47 & 76 & 123 & 199 & 322 & 521 & 843 & 1364 & 2207 \\
 2 & 1 & 2 & 3 & 4 & 5 & 6 & 10 & 15 & 21 & 31 & 46 & 67 & 98 & 144 & 211 & 309 & 453 \\
 3 & 1 & 2 & 3 & 4 & 5 & 6 & 7 & 8 & 13 & 19 & 26 & 34 & 47 & 66 & 92 & 126 & 173 \\
 4 & 1 & 2 & 3 & 4 & 5 & 6 & 7 & 8 & 9 & 10 & 16 & 23 & 31 & 40 & 50 & 66 & 89 \\
 5 & 1 & 2 & 3 & 4 & 5 & 6 & 7 & 8 & 9 & 10 & 11 & 12 & 19 & 27 & 36 & 46 & 57 \\
 6 & 1 & 2 & 3 & 4 & 5 & 6 & 7 & 8 & 9 & 10 & 11 & 12 & 13 & 14 & 22 & 31 & 41 \\
 7 & 1 & 2 & 3 & 4 & 5 & 6 & 7 & 8 & 9 & 10 & 11 & 12 & 13 & 14 & 15 & 16 & 25 \\
 8 & 1 & 2 & 3 & 4 & 5 & 6 & 7 & 8 & 9 & 10 & 11 & 12 & 13 & 14 & 15 & 16 & 17 \\
 9 & 1 & 2 & 3 & 4 & 5 & 6 & 7 & 8 & 9 & 10 & 11 & 12 & 13 & 14 & 15 & 16 & 17 \\
 10 & 1 & 2 & 3 & 4 & 5 & 6 & 7 & 8 & 9 & 10 & 11 & 12 & 13 & 14 & 15 & 16 & 17 \\
\end{array}
}\)
\captionof{table}{The number $c_{n}^{(h)}$ of  all independent sets of $\cy{n}{h}$}
\label{tab:q_nk}
\end{center}

\begin{center}
\noindent\(\footnotesize{\arraycolsep=2.2pt
\begin{array}{c|ccccccccccccccc}
  & \text{n=1} & 2 & 3 & 4 & 5 & 6 & 7 & 8 & 9 & 10 & 11 & 12 & 13 & 14 & 15 \\
\hline
 \text{h=0} & 1 & 2 & 4 & 8 & 16 & 32 & 64 & 128 & 256 & 512 & 1024 & 2048 & 4096 & 8192 & 16384 \\
 1 & 2 & 1 & 3 & 4 & 7 & 11 & 18 & 29 & 47 & 76 & 123 & 199 & 322 & 521 & 843 \\
 2 & 3 & 1 & 1 & 4 & 5 & 6 & 10 & 15 & 21 & 31 & 46 & 67 & 98 & 144 & 211 \\
 3 & 4 & 1 & 1 & 1 & 5 & 6 & 7 & 8 & 13 & 19 & 26 & 34 & 47 & 66 & 92 \\
 4 & 5 & 1 & 1 & 1 & 1 & 6 & 7 & 8 & 9 & 10 & 16 & 23 & 31 & 40 & 50 \\
 5 & 6 & 1 & 1 & 1 & 1 & 1 & 7 & 8 & 9 & 10 & 11 & 12 & 19 & 27 & 36 \\
 6 & 7 & 1 & 1 & 1 & 1 & 1 & 1 & 8 & 9 & 10 & 11 & 12 & 13 & 14 & 22 \\
 7 & 8 & 1 & 1 & 1 & 1 & 1 & 1 & 1 & 9 & 10 & 11 & 12 & 13 & 14 & 15 \\
 8 & 9 & 1 & 1 & 1 & 1 & 1 & 1 & 1 & 1 & 10 & 11 & 12 & 13 & 14 & 15 \\
 9 & 10 & 1 & 1 & 1 & 1 & 1 & 1 & 1 & 1 & 1 & 11 & 12 & 13 & 14 & 15 \\
 10 & 11 & 1 & 1 & 1 & 1 & 1 & 1 & 1 & 1 & 1 & 1 & 12 & 13 & 14 & 15 \\
\end{array}
}\)
\captionof{table}{Values of the $h$-Lucas sequence
$\mathcal{L}^{(h)}=\{L_n^{(h)}\}_{n\geq 1}$}
\label{tab:L_n}
\end{center}

\begin{center}
\noindent\(\footnotesize{\arraycolsep=2.2pt
\begin{array}{c|cccccccccccccccc}
  & \text{n=0} & 1 & 2 & 3 & 4 & 5 & 6 & 7 & 8 & 9 & 10 & 11 & 12 & 13 & 14 & 15 \\
\hline
 \text{h=0} & 0 & 1 & 4 & 12 & 32 & 80 & 192 & 448 & 1024 & 2304 & 5120 & 11264 & 24576 & 53248 & 114688 & 245760 \\
 1 & 0 & 0 & 2 & 3 & 8 & 15 & 30 & 56 & 104 & 189 & 340 & 605 & 1068 & 1872 & 3262 & 5655 \\
 2 & 0 & 0 & 0 & 3 & 4 & 5 & 12 & 21 & 32 & 54 & 90 & 143 & 228 & 364 & 574 & 900 \\
 3 & 0 & 0 & 0 & 0 & 4 & 5 & 6 & 7 & 16 & 27 & 40 & 55 & 84 & 130 & 196 & 285 \\
 4 & 0 & 0 & 0 & 0 & 0 & 5 & 6 & 7 & 8 & 9 & 20 & 33 & 48 & 65 & 84 & 120 \\
 5 & 0 & 0 & 0 & 0 & 0 & 0 & 6 & 7 & 8 & 9 & 10 & 11 & 24 & 39 & 56 & 75 \\
 6 & 0 & 0 & 0 & 0 & 0 & 0 & 0 & 7 & 8 & 9 & 10 & 11 & 12 & 13 & 28 & 45 \\
 7 & 0 & 0 & 0 & 0 & 0 & 0 & 0 & 0 & 8 & 9 & 10 & 11 & 12 & 13 & 14 & 15 \\
 8 & 0 & 0 & 0 & 0 & 0 & 0 & 0 & 0 & 0 & 9 & 10 & 11 & 12 & 13 & 14 & 15 \\
 9 & 0 & 0 & 0 & 0 & 0 & 0 & 0 & 0 & 0 & 0 & 10 & 11 & 12 & 13 & 14 & 15 \\
 10 & 0 & 0 & 0 & 0 & 0 & 0 & 0 & 0 & 0 & 0 & 0 & 11 & 12 & 13 & 14 & 15 \\
\end{array}
}\)
\captionof{table}{The number $M_n^{(h)}$ of edges of $\hc{n}{h}$, for $n>h$}
\label{tab:M_n}
\end{center}

\section*{Acknowledgment}
We are grateful to the editor for his precious work, and to the anonymous referee for a
careful reading of our paper,
and for his/her helpful suggestions which allowed us to improve the presentation,
and notation, of this work.

\bibliographystyle{elsarticle-num}

\end{document}